\title{Chickens and Dukes}
\author{Carl Imbens}
\affil{Proof School, San Francisco CA}
\date{\today}
\begin{document}
\maketitle

\abstract{



Following on the King Chicken Theorems originally proved by Maurer, we examine the idea of multiple flocks of chickens by bringing the chickens from tournaments to multipartite tournaments.
As Kings have already been studied in multipartite settings, notably by Koh-Tan and Petrovic-Thomassen, we examine a new type of chicken more suited than Kings for these multipartite graphs: Dukes.
We define an $m$-Duke to be a vertex from which any vertex in a different partite set is accessible by a directed path of length at most $m$.
In analogy with Maurer's paper, we prove various structural results regarding Dukes.
In particular, we prove the existence of 3-Dukes in all multipartite tournaments, and
we conclude by proving that in any multipartite tournament, either there is a 1-Duke, three 2-Dukes, or four 3-Dukes.

}

\section{Introduction}

One of the challenges of caring for chickens is introducing them to other flocks.
Flocks of chickens are tightly knit and will naturally oppose other flocks.
Even in a peaceful introduction, flocks of chickens are bound to establish pecking orders with each other, with copious amounts of pecking on both sides.
For a period of time, most pecking is inter-flock in an effort to establish a pecking order between flocks.
Additionally, pecking might not be transitive, and there might be a chicken who is not pecked at all, or a chicken who pecks none.
We ask whether there is a way to designate a dominant chicken in this inter-flock setting.
This follows from the paper by Stephen Maurer~\cite{maurer:King-chickens} on finding dominant chickens within a single flock. 

Maurer modeled a flock of chickens with a \emph{tournament}, which is a complete graph in which every edge is oriented from one vertex to another.
In Maurer's model, the vertices are chickens, and the orientation of an edge defines which chicken pecks the other.
Specifically, Maurer studied the existence of Kings, which are chickens which have peck chains (i.e., directed paths) of length at most 2 to all other chickens.
In fact, the first theorem of Maurer's paper states that every flock of chickens has a King.
We will not restate every benefit and drawback of the chicken model for tournaments, as Maurer has already done that quite well in his original King Chicken paper.
The overall idea is that even though the model does not perfectly represent the behaviors of flocks, it is an interesting way to look at tournaments, and a great example of modelling complex systems with Graph Theory.
In his paper, Maurer further considered \(m\)-Kings, which are chickens which have peck chains of length at most \(m\) to all other chickens.

We can model multiple flocks of chickens in a \emph{multipartite tournament}, which is simply a complete multipartite graph in which every edge is oriented from one vertex to another.
As in Maurer's model, the orientation of an edge still defines which chicken pecks the other.
However, what distinguishes the multipartite model from Maurer's is that the partite sets now represent flocks, and pecking only happens between chickens in distinct flocks.
Kings can be defined in the same way Maurer did.
In fact, the existence of Kings in multipartite tournaments has been studied by Gutin~\cite{gutin:radii-of-n-partite-tournaments}, Petrovic-Thomassen~\cite{petrovic-thomassen:k-partite-Kings}, Koh-Tan~\cite{koh-tan:multipartite-Kings,koh-tan:multipartite-number-of-Kings}, Gutin-Yeo~\cite{gutin-yeo:Kings-in-semicomp-multip-digraphs} and Tan~\cite{tan:3-Kings-and-4-Kings}.
Unfortunately, in multipartite tournaments, as noted by Koh-Tan~\cite{koh-tan:bipartite-4-Kings-no-3-Kings}, there is not necessarily a King.
A simple example of this can be seen in a bipartite graph with all arcs going from the first partite set to the second; no vertex will have a directed path to other vertices in its own partite set.
Taking heed from when Maurer shifted the view of Kings from one-length directed paths to two-length directed paths, we shall introduce a new notion of dominance to ensure that dominant chickens exist.


The change we will make is a simple one: we will remove the requirement that a dominant chicken has peck chains of any length to the other chickens within its own flock.
It is important to distinguish between Kings and our new class of chicken.
As they are a step down from Kings, we will call them Dukes.
In general, we will call a chicken $d$ in a multipartite tournament an \emph{\(m\)-Duke} if, for any chicken $c$ not in $d$'s flock, there exists a peck chain of length at most $m$ from $d$ to $c$.
Note that every \(m\)-Duke is also an \((m+1)\)-Duke.
In this paper, we explore how Dukes exist in multipartite tournaments.
We will go on to show not only that in every multipartite tournament there exists a 3-Duke, but that in fact there exists either a 1-Duke, three 2-Dukes, or four 3-Dukes---see Thereom~\ref{thm:multiflock-finale}.


It is also interesting to note the relationship between Dukes and Kings.
Firstly, the difference between Dukes and Kings is noticeable only when each flock contains multiple chickens.
However, when we have multiple chickens in any flock, they become distinct. An \(m\)-King is by definition stronger than an \(m\)-Duke, but is not guaranteed to exist.
Still, we can notice something else: in a graph with no 1-Duke, any $m$-Duke is an $(m+1)$-King.
This is because for every vertex in its partite set, there is some edge directed to it from a vertex outside of its partite set, which in turn has a directed path of length at most $m$ from an $m$-Duke to it. 
The result in \cite{koh-tan:multipartite-Kings} which says, in the absence of a 1-Duke, there are at least three 4-Kings, follows from our Theorem~\ref{thm:multiflock-finale}.
In fact, Theorem~\ref{thm:multiflock-finale} is stronger.
We prove the existence of what would be either three 3-Kings, or four 4-Kings in the absence of a 1-Duke.
Finally, although a 3-Duke must be 4-King in the absence of 1-Dukes, the converse is not necessarily true; a 4-King might have a path of length four to every other vertex including the vertices in its partite set, but that would only guarantee it to be a 4-Duke, not a 3-Duke.
For an example of a 4-King which is not a 3-Duke, see Figure~\ref{fig:4king-not-3duke}.
The shaded ellipses represent flocks.

\begin{figure}
    \centering
    \begin{tikzpicture}
    
        \fill[black!10] (0*360/5:2) ellipse (0.5 and 0.4) ;
        \fill[black!10] (1*360/5:2) ellipse (0.5 and 0.4) ;
        \fill[black!10] (2*360/5:2) ellipse (0.5 and 0.4) ;
        \fill[black!10] (3*360/5:2) ellipse (0.5 and 0.4) ;
        \fill[black!10] (4*360/5:2) ellipse (0.5 and 0.4) ;
    
        \node[draw, circle, fill] (a) at (0*360/5:2) {} ;
        \node[draw, circle, fill] (b) at (1*360/5:2) {} ;
        \node[draw, circle, fill] (c) at (2*360/5:2) {} ;
        \node[draw, circle, fill] (d) at (3*360/5:2) {} ;
        \node[draw, circle, fill] (e) at (4*360/5:2) {} ;

        \draw[-latex] (a) to (b) ;
        \draw[-latex] (b) to (c) ;
        \draw[-latex] (c) to (d) ;
        \draw[-latex] (d) to (e) ;
        \draw[-latex] (e) to (a) ;
        \draw[-latex] (d) to (a) ;
        \draw[-latex] (c) to (a) ;
        \draw[-latex] (e) to (b) ;
        \draw[-latex] (e) to (c) ;
        \draw[-latex] (d) to (b) ;
    \end{tikzpicture}
    \caption{A multipartite tournament without transmitters containing a \(4\)-King (far right) which isn't a \(3\)-Duke.}
    \label{fig:4king-not-3duke}
\end{figure}
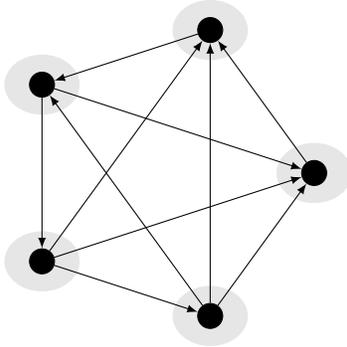

\section{Preliminary information}

In this paper, we will consider a \emph{multi-flock chicken graph} (resp. \emph{$n$-flock chicken graph} for some $n \geqslant 1$) to be a multipartite (resp. $n$-partite) graph with an orientation.
Similarly, a \emph{bi-flock chicken graph} will refer to a complete bipartite graph with an orientation.
Occasionally, we will refer to these graphs simply as \emph{multi-flock graphs} or \emph{chicken graphs}.
We will refer to the vertices of the graphs as \emph{chickens} and the partite sets as \emph{flocks}.
Keeping with standard notation, we will let \(V_1,\ldots,V_n\) denote the flocks in an \(n\)-flock chicken graph and let \(V = V_1 \cup \cdots \cup V_n\) denote the entire set of chickens.
If \(c\) is a chicken in a chicken graph, we will let \(F_c\) (rather than \(V_c\)) denote the flock containing \(c\).
If \(c\) and \(d\) are two chickens in a chicken graph, then we will use \(c \rightarrow d\) or the phrase ``\(c\) pecks \(d\)'' to denote that the edge between \(c\) and \(d\) is oriented from \(c\) to \(d\).
We will use the phrase \emph{peck chain} to denote a directed path.
In particular,
given \(m+1\) chickens \(c,c_1,c_2\ldots,c_m\) such that
\[
c \rightarrow c_1 \rightarrow c_2 \rightarrow \cdots \rightarrow c_m,
\]
we will say that \(c\) has a \emph{peck chain} of length \(m\) to \(c_m\).

We now recall our central definition.
A chicken $d$ in a multi-flock chicken graph is an \emph{$m$-Duke} if, for any chicken $c \notin F_d$, there exists a peck chain of length at most $m$ from $d$ to $c$.
For example, a \(1\)-Duke pecks every chicken not in its flock, and such chickens have been called \emph{transmitters} in the literature.
For another example, if \(d\) is a \(2\)-Duke, then every chicken not in $d$'s flock is pecked by some chicken which $d$ pecks.
As we mentioned before, in a tournament (i.e., a multipartite tournament where each partite set is a singleton), \(m\)-Dukes and \(m\)-Kings are the same.
Maurer showed that \(2\)-Dukes exist in this setting,
which leads naturally to the following question.

Does the existence of 2-Dukes generalize to chicken graphs where the flocks have multiple chickens?
The answer to this question is no.
We can quickly see that a bi-flock graph presents a clear contradiction: if there is no chicken who pecks all chickens in the other flock, there will be no 2-Duke.
We can, sadly, generalize this onto an arbitrary number of flocks, where two flocks exist as described, and the chickens in those two flocks peck all other chickens.
As this is a contradiction that 1- or 2-Dukes must exist in a multi-flock chicken graph, we will move on to 3-Dukes.
Even from the start, we can see advantages of a 3-Duke in bi-flock chicken graphs.
Since chickens only peck chickens in the other flock, any path from one chicken to the other flock will have odd length.
Is this enough of an advantage to guarantee a 3-Duke's existence?
Well, we will show that 3-Dukes exist not only in any bi-flock chicken graph, but in any multi-flock chicken graph---see Theorem~\ref{thm:nflock-exists-3Duke}.
However, this is not a strong bound, and so we will improve on it. 

We will require several more terms.
We already know what constitutes a chicken graph, and what a Duke is.
We will use the following convention throughout: chickens are represented by lower case letters, and flocks by upper case letters.
We define two flocks to be \emph{balanced} if there is no chicken in either flock which pecks all chickens in the other flock.
Conversely, we will say a flock \(V_i\) \emph{dominates} another flock \(V_j\) if there is some chicken in \(V_i\) who pecks all chickens in \(V_j\).
Furthermore, in this case, we will define a chicken \(c \in V_i\) pecking every chicken in \(V_j\) to be a \emph{dominating} chicken.
In any flock, we define a chicken who pecks at least as many chickens as any other chicken in its flock to be a \emph{prominent} chicken.
Given a chicken $d$ and a set of chickens $\mathcal{A}$, if $d$ has peck chains of length at most $m$ to every chicken in $\mathcal{A}$, we will say that $d$ is an \emph{$m$-Duke over $\mathcal{A}$}.
Finally, we will say that a chicken $e$ \emph{eclipses} another chicken $d \in F_e$ if $e$ pecks all of the chickens $d$ pecks and at least one chicken $d$ does not peck.
A chicken is \emph{non-eclipsed} if no other chicken in its flock eclipses it.

\section{Bi-flock graphs}

We are now ready to show the existence of 3-Dukes, starting in bi-flock graphs.

\begin{lemma}
\label{lemma:bi-flock-3-Duke}
In a bi-flock chicken graph, any chicken who pecks a prominent chicken must be a 3-Duke. 
\end{lemma}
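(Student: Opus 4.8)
The plan is to exploit the bipartite structure together with the defining property of prominence (maximum out-degree within a flock), mirroring the classical argument that a maximum-out-degree vertex of a tournament is a $2$-King. Suppose $d$ pecks a prominent chicken $p$, and write $F_p = V_1$ and $F_d = V_2$, so that $p \in V_1$, $d \in V_2$, and $d \to p$. Since $d$'s flock is $V_2$, being a $3$-Duke means having a peck chain of length at most $3$ to every chicken of $V_1$. Because the graph is bipartite, any such chain must have odd length, so I would reduce the claim to showing that for each $c \in V_1$ either $d \to c$ (a chain of length $1$) or $d$ reaches $c$ by a chain of length exactly $3$.

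First I would fix an arbitrary $c \in V_1$; the case $c = p$ is immediate from $d \to p$, so I may assume $c \neq p$. Let $P$ and $C$ denote the sets of chickens (necessarily in $V_2$) pecked by $p$ and by $c$, respectively. The argument splits into two cases according to whether $p$ has a peck chain of length $2$ to $c$, i.e.\ whether some $b \in P$ satisfies $b \to c$.

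In the first case, such a $b$ exists, and prepending the edge $d \to p$ yields the chain $d \to p \to b \to c$ of length $3$, as desired. In the second case, no $b \in P$ pecks $c$, so every $b \in P$ is instead pecked by $c$, which gives the containment $P \subseteq C$. This is where prominence does the work: since $p$ has maximum out-degree in $V_1$ we have $|C| \leqslant |P|$, and combined with $P \subseteq C$ this forces $P = C$. Finally, since $d \to p$ we have $d \notin P = C$, meaning $c$ does not peck $d$, so $d \to c$ and $d$ reaches $c$ in a single step.

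As $c \in V_1$ was arbitrary, this establishes that $d$ is a $3$-Duke. The step I expect to require the most care is the second case: the point is that ``no length-$2$ chain from $p$ to $c$'' is precisely the hypothesis that upgrades prominence into the set equality $P = C$, and one must then notice that the lone edge $d \to p$ transfers this information into a direct peck $d \to c$. The only genuine obstacle is keeping the orientation bookkeeping straight (which chicken pecks which, and in particular that $d \to p$ forces $d \notin P$); the parity observation makes everything else routine.
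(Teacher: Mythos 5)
Your proof is correct and uses essentially the same idea as the paper's: compare the out-neighborhoods of $c$ and the prominent chicken via the maximality of the latter's out-degree, concluding either a direct peck $d \to c$ or a chain $d \to p \to b \to c$. The only difference is organizational---the paper splits on whether $d \to c$ and uses prominence to produce the intermediate chicken, while you split on whether a length-$2$ chain from $p$ to $c$ exists and use prominence (via $P = C$ and $d \notin P$) to force $d \to c$; these are contrapositives of one another.
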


\begin{proof}
Let \(k \in V_1\) be any prominent chicken, and let $d \in V_2$ be a chicken which pecks $k$.
We claim that \(d\) is a 3-Duke.
Consider any chicken $c \in V_1$.
If \(d \rightarrow c\), we are done.
Otherwise, \(c \rightarrow d\), as in Figure~\ref{fig:biflock-peck-prom}.
\begin{figure}[ht]
    \centering
    \begin{tikzpicture}
        \fill[black!10] (-2,-1) ellipse (1 and 2) ;
        \fill[black!10] (2,-1) ellipse (1 and 2) ;
        
        \node (V1) at (-2,1.5) {$V_1$} ;
        \node (V2) at (2,1.5) {$V_2$} ;
    
        \node[draw, circle] (k) at (-2,0) {$k$} ;
        \node[draw, circle] (d) at (2,0) {$d$} ;
        \node[draw, circle] (f) at (2,-2) {$f$} ;
        \node[draw, circle] (c) at (-2,-2) {$c$} ;
        
        \draw[-latex] (k) -- (f) ;
        \draw[-latex] (f) -- (c) ;
        \draw[-latex] (c) -- (d) ;
        \draw[-latex] (d) -- (k) ;
        
    \end{tikzpicture}
    \caption{$d$ pecks a prominent chicken, $k$.}
    \label{fig:biflock-peck-prom}
\end{figure}
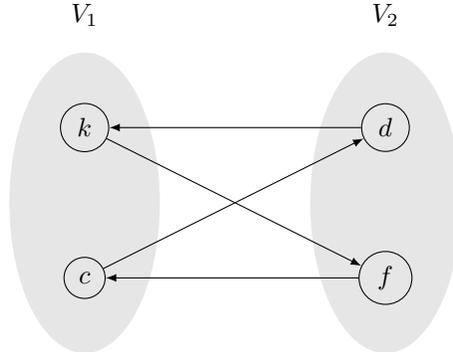
Moreover, since \(c\) does not peck more chickens than \(k\), there must be some chicken \(f \in V_2\) such that \(k \rightarrow f \rightarrow c\).
This implies $d \rightarrow k \rightarrow f \rightarrow c$, and we can conclude $d$ is a 3-Duke.
Finally, this argument still holds with the roles of \(V_1\) and \(V_2\) reversed.
\end{proof}

We know that, in an unbalanced pair of flocks, there exists a prominent chicken who is not pecked at all.
Such a chicken is clearly a 3-Duke.
In fact, it is a 1-Duke, so we are left to consider a balanced pair of flocks.

\begin{corollary}
\label{cor:balanced-bi-flock-3-Duke}
In a bi-flock graph where the two flocks are balanced, each flock contains a 3-Duke.
\end{corollary}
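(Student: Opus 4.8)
The plan is to reduce the corollary directly to Lemma~\ref{lemma:bi-flock-3-Duke}, which already does all of the combinatorial work. By the symmetry between $V_1$ and $V_2$, it suffices to produce a 3-Duke inside $V_2$; running the same argument with the two flocks interchanged will then produce one inside $V_1$, establishing the claim for each flock.

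First I would fix a prominent chicken $k \in V_1$, which exists because $V_1$ is finite and nonempty. The crucial point is to extract an incoming arc to $k$ from the opposite flock, and this is exactly where the balanced hypothesis enters. By definition, balanced means that no chicken in either flock pecks all chickens of the other flock; in particular $k$ does not peck every chicken of $V_2$. Since the graph is a complete bipartite orientation, any chicken $d \in V_2$ that $k$ fails to peck must instead satisfy $d \rightarrow k$, so at least one such pecker $d$ exists.

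Now $d \in V_2$ is a chicken that pecks the prominent chicken $k$, so Lemma~\ref{lemma:bi-flock-3-Duke} immediately yields that $d$ is a 3-Duke, giving the desired 3-Duke in $V_2$. Interchanging the roles of $V_1$ and $V_2$ and repeating the argument produces a 3-Duke in $V_1$ as well.

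I do not anticipate any genuine obstacle here, since the entire structural burden is carried by the Lemma. The only step that requires any care is verifying that the balanced condition guarantees the incoming arc $d \rightarrow k$ at a prominent chicken, and this is an immediate unpacking of the definition of \emph{balanced} together with the completeness of the underlying bipartite graph. The rest is bookkeeping and an appeal to the symmetry already noted at the end of the proof of Lemma~\ref{lemma:bi-flock-3-Duke}.
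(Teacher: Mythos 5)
Your proposal is correct and follows essentially the same route as the paper: both use the balanced hypothesis to produce, for a prominent chicken in each flock, a pecker in the opposite flock, and then invoke Lemma~\ref{lemma:bi-flock-3-Duke} to conclude that this pecker is a 3-Duke. The only difference is that you spell out the unpacking of the definition of \emph{balanced} in more detail than the paper does.
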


\begin{proof}
By the definition of ``balanced,'' a prominent chicken in either flock must be pecked by some chicken in the other flock.
Therefore,  by Lemma~\ref{lemma:bi-flock-3-Duke}, each flock contains a 3-Duke.
\end{proof}

One of the things that is further explored in Maurer's paper is the possibility of multiple Kings.
In Theorem~\ref{thm:bi-flock-1Duke-or43Dukes} below, not only will we show it is possible for there to be multiple 3-Dukes, but we will show it is a necessity in the absence of a 1-Duke.
We require one more lemma before proving the theorem.

\begin{lemma}
\label{lemma:bi-flock-pecks-most}
In any bi-flock chicken graph, a prominent chicken in a non-dominated flock is a 3-Duke.
\end{lemma}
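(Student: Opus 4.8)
The plan is to fix a prominent chicken $p$ in the non-dominated flock---say $p \in V_1$, so that no chicken of $V_2$ pecks all of $V_1$---and to show that $p$ reaches every $c \in V_2$ by a peck chain of length at most $3$. Because the graph is bipartite, every peck chain from $V_1$ to $V_2$ has odd length, so it suffices to produce a chain of length $1$ or $3$. If $p \rightarrow c$ we are done immediately, so the only case requiring work is $c \rightarrow p$, where I aim to build a chain $p \rightarrow a \rightarrow b \rightarrow c$ with $a \in V_2$ and $b \in V_1$.

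To locate the last intermediate chicken $b$, I would invoke the non-dominated hypothesis: since no chicken of $V_2$ dominates $V_1$, the chicken $c$ cannot peck all of $V_1$, so there is some $b \in V_1$ with $b \rightarrow c$ (and necessarily $b \neq p$, since $p$ does not peck $c$). It then remains to find some $a \in V_2$ with $p \rightarrow a \rightarrow b$---a chicken that $p$ pecks and that in turn pecks $b$.

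The crux, and the step I expect to be the main obstacle, is a degree-counting argument in the spirit of Maurer's proof that a maximum-out-degree vertex is a King. Suppose no such $a$ exists. Then every chicken of $V_2$ that pecks $b$ must instead peck $p$; comparing out-neighborhoods inside $V_2$ (and using that the edge between any chicken of $V_2$ and $b$ must be oriented one way or the other), this forces $b$ to peck every chicken that $p$ pecks. But $b$ also pecks $c$, whereas $p$ does not, since $c \rightarrow p$. Hence $b$ pecks strictly more chickens than $p$, and as $b, p \in V_1$ this contradicts the prominence of $p$. Therefore the intermediate chicken $a$ must exist, realizing the length-$3$ chain $p \rightarrow a \rightarrow b \rightarrow c$.

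Unlike the earlier lemmas, I expect no appeal to symmetry between $V_1$ and $V_2$ will be needed, since the prominent chicken and its non-dominated flock are fixed at the outset. The only genuine content is the out-degree comparison, which is precisely where the prominence and non-domination hypotheses combine: non-domination guarantees the existence of $b$, while prominence supplies the contradiction once $b$ is shown to out-peck $p$.
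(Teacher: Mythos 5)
Your proposal is correct and follows essentially the same route as the paper's proof: non-domination supplies a chicken $b \in V_1$ with $b \rightarrow c$, and prominence of $p$ forces an intermediate $a \in V_2$ with $p \rightarrow a \rightarrow b$, yielding the chain $p \rightarrow a \rightarrow b \rightarrow c$. The only difference is cosmetic---you phrase the existence of $a$ as a contradiction argument, while the paper states directly that $b$ cannot peck everything $p$ pecks.
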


\begin{proof}
Without loss of generality, suppose \(V_1\) is not dominated by \(V_2\), and let \(d \in V_1\) be a prominent chicken.
Consider an arbitrary chicken \(k \in V_2\).
If \(d \rightarrow k\), we are done.
Otherwise, \(k \rightarrow d\).
See Figure~\ref{fig:bi-flock-pecks-most} for a depiction of this case.
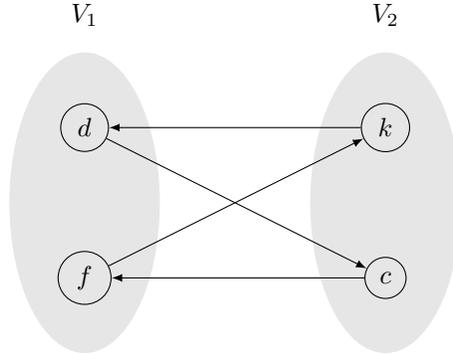
\begin{figure}[ht]
    \centering
    \begin{tikzpicture}
        \fill[black!10] (-2,-1) ellipse (1 and 2) ;
        \fill[black!10] (2,-1) ellipse (1 and 2) ;
        
        \node (V1) at (-2,1.5) {$V_1$} ;
        \node (V2) at (2,1.5) {$V_2$} ;
    
        \node[draw, circle] (d) at (-2,0) {$d$} ;
        \node[draw, circle] (k) at (2,0) {$k$} ;
        \node[draw, circle] (c) at (2,-2) {$c$} ;
        \node[draw, circle] (f) at (-2,-2) {$f$} ;
        
        \draw[-latex] (k) -- (d) ;
        \draw[-latex] (f) -- (k) ;
        \draw[-latex] (c) -- (f) ;
        \draw[-latex] (d) -- (c) ;
    \end{tikzpicture}
    \caption{The case \(k \rightarrow d\) in Lemma~\ref{lemma:bi-flock-pecks-most}.}
    \label{fig:bi-flock-pecks-most}
\end{figure}
Since \(V_1\) is not dominated by \(V_2\), there exists some chicken \(f \in V_1\) such that \(f \rightarrow k\).
Furthermore, since \(d\) is prominent, \(f\) cannot \emph{also} peck all the chickens \(d\) pecks.
Therefore, for some chicken \(c \in V_2\), we have
\(d \rightarrow c \rightarrow f \rightarrow k\).
Thus, \(d\) is a 3-Duke.
\end{proof}

\begin{theorem}
\label{thm:bi-flock-1Duke-or43Dukes}
For any bi-flock chicken graph, either there exists a 1-Duke, or there exist four 3-Dukes.
\end{theorem}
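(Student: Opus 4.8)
The plan is to first dispose of the case where a \(1\)-Duke exists and then show that, in its absence, each flock contains two distinct \(3\)-Dukes, for four in total. I would begin by observing that in a bi-flock graph a \(1\)-Duke is exactly a dominating chicken: a Duke owes nothing to its own flock, so an \(m\)-Duke's only obligation is to the single other flock, and a \(1\)-Duke must peck all of it. Hence if either flock dominates the other we are done, and otherwise the two flocks are \emph{balanced} in the sense defined above. So assume from now on that \(V_1\) and \(V_2\) are balanced, and aim to exhibit four \(3\)-Dukes.

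Next I would fix a prominent chicken \(p_1 \in V_1\) and a prominent chicken \(p_2 \in V_2\). Since balanced flocks are non-dominated, Lemma~\ref{lemma:bi-flock-pecks-most} makes both \(p_1\) and \(p_2\) into \(3\)-Dukes; these will be two of the four. After relabelling the flocks if necessary, I may assume the edge between them is oriented \(p_1 \rightarrow p_2\). The first payoff of this orientation is a second \(3\)-Duke in \(V_2\): because the flocks are balanced, \(p_1\) does not dominate \(V_2\), so some chicken of \(V_2\) pecks \(p_1\), and by Lemma~\ref{lemma:bi-flock-3-Duke} every such pecker is a \(3\)-Duke. Moreover \(p_2 \not\rightarrow p_1\) (as \(p_1 \rightarrow p_2\)), so no pecker of \(p_1\) equals \(p_2\); thus \(p_2\) together with any pecker of \(p_1\) gives two distinct \(3\)-Dukes in \(V_2\).

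It remains to produce a second \(3\)-Duke in \(V_1\), and I would split on how many chickens of \(V_2\) peck \(p_1\). If at least two chickens peck \(p_1\), then by the previous paragraph \(V_2\) already contains at least three \(3\)-Dukes (two distinct peckers together with \(p_2\)), which with \(p_1\) yields four and finishes the proof. The remaining case is when exactly one chicken \(r \in V_2\) pecks \(p_1\); then \(p_1\) pecks every other chicken of \(V_2\). Here I would use the balanced hypothesis once more: \(r\) cannot dominate \(V_1\), so some \(w \in V_1\) pecks \(r\), and since \(p_1\) does not peck \(r\) this \(w\) is distinct from \(p_1\). The point is that \(w\) is itself a \(3\)-Duke: for \(u = r\) we have \(w \rightarrow r\), while for every other \(u \in V_2\) the chain \(w \rightarrow r \rightarrow p_1 \rightarrow u\) has length three, using \(r \rightarrow p_1\) and that \(p_1\) pecks all of \(V_2\) except \(r\). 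Thus \(\{p_1, w\}\) and \(\{p_2, r\}\) furnish four distinct \(3\)-Dukes.

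The main obstacle is precisely this single-pecker case. When \(p_1\) is pecked by only one chicken \(r\), one cannot find a second \(3\)-Duke in \(V_1\) merely by invoking Lemma~\ref{lemma:bi-flock-3-Duke}, since the chickens of \(V_1\) that peck a prominent chicken of \(V_2\) might all coincide with \(p_1\) itself; the explicit length-three chain \(w \rightarrow r \rightarrow p_1 \rightarrow u\) is what breaks the deadlock. Verifying that a suitable \(w \neq p_1\) actually exists—not merely that it would be a \(3\)-Duke—is where the balanced hypothesis is spent one final time, via the fact that \(r\) cannot peck all of \(V_1\).
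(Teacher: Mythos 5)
Your proof is correct, and it rests on the same two pillars as the paper's: prominent chickens in a balanced bi-flock graph are 3-Dukes (Lemma~\ref{lemma:bi-flock-pecks-most}), and any chicken pecking a prominent chicken is a 3-Duke (Lemma~\ref{lemma:bi-flock-3-Duke}). The difference is in how the case analysis is organized, and yours is leaner. The paper splits on whether each of the two prominent chickens has at least two peckers, and in the unique-pecker case must descend a further level: it looks at the peckers of the unique pecker $d$, and when $d$ itself is pecked only once it needs a third chicken $g$ with $g \rightarrow f \rightarrow d$ to complete the count. You instead fix the orientation of the edge between the two prominent chickens, WLOG $p_1 \rightarrow p_2$, which immediately guarantees that every pecker of $p_1$ is distinct from $p_2$; this lets you bank $p_2$ and the unique pecker $r$ as two distinct 3-Dukes in $V_2$ and reduces the remaining work to producing one more 3-Duke in $V_1$, which your explicit chain $w \rightarrow r \rightarrow p_1 \rightarrow u$ supplies (with $w \neq p_1$ because $r \rightarrow p_1$). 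The net effect is a single case split on the number of peckers of $p_1$ where the paper needs two nested ones; both arguments must verify the same distinctness facts, but yours gets them essentially for free from the orientation of one edge.
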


\begin{proof}
Suppose there is no 1-Duke.
Then the two flocks are balanced, and every chicken is pecked.
Let \(k_1 \in V_1\) and \(k_2 \in V_2\) be prominent chickens.
By Lemma~\ref{lemma:bi-flock-pecks-most}, $k_1$ and $k_2$ are 3-Dukes.
Furthermore, by Lemma~\ref{lemma:bi-flock-3-Duke}, all chickens pecking \(k_1\) or \(k_2\) are 3-Dukes.
If at least two chickens peck each of \(k_1\) and \(k_2\), we are done.

Otherwise, without loss of generality, we can assume there is exactly one chicken, \(d \in V_2\), who pecks \(k = k_1\).
In this case, any chicken which pecks $d$ would be a 3-Duke, as it would peck $d$ and have a peck chain of length 2 to $k$, which pecks all chickens in \(V_2\) except $d$.

If $d$ is pecked at least twice, then there exist two 3-Dukes $f_1, f_2 \in V_1$ which peck \(d\) and are not $k$, as in Figure~\ref{fig:bi-flock-thm-case1}.
Therefore, $f_1 , f_2 , k$, and $d$ are all 3-Dukes.
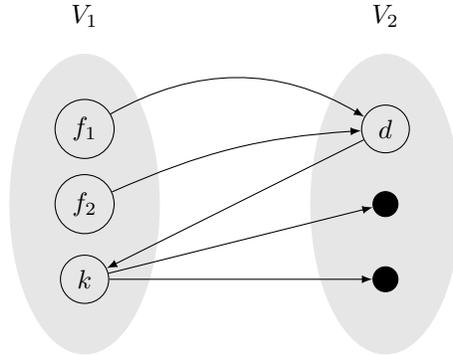
\begin{figure}[ht]
    \centering
    \begin{tikzpicture}
        \fill[black!10] (-2,-1) ellipse (1 and 2) ;
        \fill[black!10] (2,-1) ellipse (1 and 2) ;
        
        \node (V1) at (-2,1.5) {$V_1$} ;
        \node (V2) at (2,1.5) {$V_2$} ;
    
        \node[draw, circle] (d) at (2,0) {$d$} ;
        \node[draw, circle] (k) at (-2,-2) {$k$} ;
        \node[draw, circle] (f1) at (-2,0) {$f_1$} ;
        \node[draw, circle] (f2) at (-2,-1) {$f_2$} ;
        
        \node[draw, circle, fill] (a) at (2,-1) {} ;
        \node [draw, circle, fill] (b) at (2,-2) {} ;
        
        \draw[-latex] (d) -- (k) ;
        \draw[-latex] (f1) to [bend left] (d) ;
        \draw[-latex] (f2) to [bend left=10] (d) ;
        \draw[-latex] (k) -- (a) ;
        \draw[-latex] (k) -- (b) ;
    \end{tikzpicture}
    \caption{\(d\) is pecked twice in Theorem~\ref{thm:bi-flock-1Duke-or43Dukes}.}
    \label{fig:bi-flock-thm-case1}
\end{figure}

If $d$ is pecked exactly once, then there exists some $f \in V_1$ who pecks $d$ and some $g \in V_2$ who pecks $f$, as in Figure~\ref{fig:bi-flock-thm-case2}.
Given that $g \rightarrow f \rightarrow d$ and the fact that $d$ is only pecked once, we have that $g$ is a 3-Duke.
In this case, \(d,k,f\), and \(g\) are our four 3-Dukes.
\begin{figure}[ht]
    \centering
    \begin{tikzpicture}
        \fill[black!10] (-2,-1) ellipse (1 and 2) ;
        \fill[black!10] (2,-1) ellipse (1 and 2) ;
        
        \node (V1) at (-2,1.5) {$V_1$} ;
        \node (V2) at (2,1.5) {$V_2$} ;
    
        \node[draw, circle] (d) at (2,0) {$d$} ;
        \node[draw, circle] (k) at (-2,-2) {$k$} ;
        \node[draw, circle] (f) at (-2,-1) {$f$} ;
        
        \node[draw, circle] (g) at (2,-1) {\(g\)} ;
        \node [draw, circle, fill] (b) at (2,-2) {} ;
        \node [draw, circle, fill] (a) at (-2,0) {} ;
        
        \draw[-latex] (d) to [bend right=20] (a) ;
        \draw[-latex] (d) -- (k) ;
        \draw[-latex] (f) to [bend left=10] (d) ;
        \draw[-latex] (g) to (f) ;
        \draw[-latex] (k) to [bend right=20] (b) ;
        \draw[-latex] (k) to [bend right=10] (g) ;
    \end{tikzpicture}
    \caption{\(d\) is pecked once in Theorem~\ref{thm:bi-flock-1Duke-or43Dukes}.}
    \label{fig:bi-flock-thm-case2}
\end{figure}
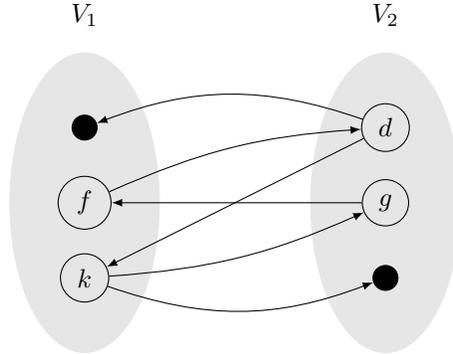
\end{proof}

We can see two cases, together demonstrating our proof is of maximal strength.
Figure~\ref{fig:biflock-1-1d} shows a bi-flock graph with a 1-Duke and no other Dukes.
Figure~\ref{fig:biflock-4-3ds} shows a bi-flock graph with exactly four 3-Dukes and no 1-Duke.

\begin{figure}[ht]
    \centering
    \begin{tikzpicture}
        \fill[black!10] (0,0) ellipse (1 and 0.5) ;
        \fill[black!10] (0,-2) ellipse (2 and 0.5) ;
        
        \node[draw,circle,fill] (a) at (0,0) {} ;
        \node[draw,circle,fill] (b) at (0,-2) {} ;
        \node[draw,circle,fill] (b-) at (-1,-2) {} ;
        \node[draw,circle,fill] (b+) at (1,-2) {} ;
        
        \draw[-latex] (a) to (b) ;
        \draw[-latex] (a) to (b-) ;
        \draw[-latex] (a) to (b+) ;
        
    \end{tikzpicture}
    \caption{A bi-flock graph with exactly one 1-Duke and no other Dukes.}
    \label{fig:biflock-1-1d}
\end{figure}
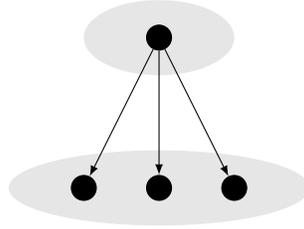

\begin{figure}[ht]
    \centering
    \begin{tikzpicture}
        \fill[black!10] (0,0) ellipse (2 and 0.5) ;
        \fill[black!10] (0,-2) ellipse (3 and 0.5) ;
        
        \node[draw,circle,fill] (a) at (-1,0) {} ;
        \node[draw,circle,fill] (b) at (1,0) {} ;
        
        \node[draw,circle,fill] (c) at (-2,-2) {} ;
        \node[draw,circle,fill] (d) at (-1,-2) {} ;
        \node[draw,circle,fill] (e) at (1,-2) {} ;
        \node[draw,circle,fill] (f) at (2,-2) {} ;
        
        \draw[-latex] (a) to (d) ;
        \draw[-latex] (a) to (e) ;
        \draw[-latex] (a) to (f) ;
        
        \draw[-latex] (b) to (c) ;
        \draw[-latex] (b) to (d) ;
        \draw[-latex] (b) to (e) ;
        
        \draw[-latex] (c) to (a) ;
        \draw[-latex] (f) to (b) ;
        
    \end{tikzpicture}
    \caption{A bi-flock graph with only four 3-Dukes and no 1-Duke.}
    \label{fig:biflock-4-3ds}
\end{figure}
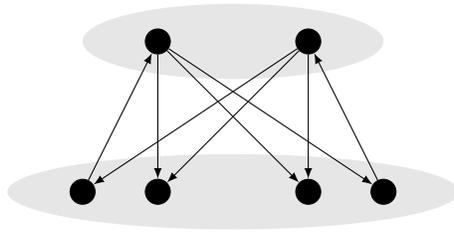

\section{Multi-flock graphs}

We begin this section by showing that a 3-Duke exists in any $n$-flock graph. 

\begin{theorem}
\label{thm:nflock-exists-3Duke}
In any multi-flock chicken graph, there exists a 3-Duke.
\end{theorem}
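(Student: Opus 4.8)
The plan is to first isolate the structural tool by lifting Lemma~\ref{lemma:bi-flock-pecks-most} from the bipartite setting to the multipartite one, and then to reduce the general statement to the situation this tool covers. Concretely, I would first establish the following analogue: a prominent chicken $d$ in a flock $V_1$ that is not dominated by any other flock is a 3-Duke. To see this, take any chicken $k \notin V_1$; if $d \rightarrow k$ we are done, so suppose $k \rightarrow d$. Since $V_1$ is not dominated by $F_k$, the chicken $k$ does not peck all of $V_1$, so some $f \in V_1$ has $f \rightarrow k$. If no chicken that $d$ pecks also pecks $f$, then (every such chicken lies outside $V_1$, so its edge to $f$ is present) $f$ would peck every chicken $d$ pecks together with $k$, which $d$ does not peck; this makes $f$ peck strictly more chickens than $d$, impossible since both lie in $V_1$ and $d$ is prominent. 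Hence there is some $c$ with $d \rightarrow c \rightarrow f$, giving the peck chain $d \rightarrow c \rightarrow f \rightarrow k$.

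With this tool, the theorem follows immediately whenever some flock fails to be dominated by any other flock, since a prominent chicken of that flock is then a 3-Duke. It remains to handle the case in which every flock is dominated by another, which I would approach by induction on the number of flocks $n$, the cases $n \leqslant 2$ being Corollary~\ref{cor:balanced-bi-flock-3-Duke}. Delete one flock $V_n$ to obtain an $(n-1)$-flock graph, which by induction contains a 3-Duke $d$. Deleting chickens cannot lengthen any surviving peck chain, so in the full graph $d$ still has peck chains of length at most 3 to every chicken outside $F_d$ except possibly those in $V_n$. To finish, one reaches $V_n$ through a chicken $a$ dominating it: a peck chain from $d$ to $a$ followed by one further peck into $V_n$ covers every chicken of $V_n$.

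The hard part is precisely this last step. The inductive hypothesis only guarantees a peck chain of length at most 3 from $d$ to the dominator $a$, which yields length 4 rather than 3 to $V_n$; moreover $a$ might lie in $F_d$, in which case $d$ need not reach it at all. Overcoming this appears to require choosing the deleted flock and its dominator so that $d$ reaches $a$ in at most two pecks, or else strengthening the inductive claim to deliver a 3-Duke together with a peck chain of length at most 2 to a prescribed dominator. The genuinely resistant configuration is cyclic domination among the flocks (say $V_1$ dominating $V_2$, and so on, dominating $V_1$), where no flock is undominated and so the tool above gives nothing directly; there one must instead use the cross edges of the configuration---exactly the chords that, in Figure~\ref{fig:4king-not-3duke}, let the flock-cycle on five singletons still contain a 3-Duke---to keep the controlling peck chains short.
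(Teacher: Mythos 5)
Your first half is correct: the lifted version of Lemma~\ref{lemma:bi-flock-pecks-most} (a prominent chicken of a flock dominated by no other flock is a 3-Duke) is exactly what the paper uses for its first case, and your eclipse-style counting argument for the existence of the intermediate chicken $c$ is sound. The problem is the complementary case, where \emph{every} flock is dominated by some other flock. There your proof stops at an acknowledged gap: the induction you set up does not close, for precisely the reasons you list (the inductive 3-Duke $d$ may need three pecks just to reach a dominator $a$ of the deleted flock, giving length 4, and $a$ may even lie in $F_d$), and neither of your proposed repairs---re-choosing the deleted flock, or strengthening the inductive statement---is carried out. The ``cyclic domination'' configuration you flag as resistant is not a corner case; it is the generic shape of this case, so the theorem is not proved.

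The missing idea is an extremal choice rather than induction. Let $V_i$ be a flock that dominates the greatest number of flocks, and let $d$ be a chicken pecking all of $V_i$ (one exists since $V_i$ is itself dominated; note $d \notin V_i$). Then $d$ reaches all of $V_i$ in one peck, and all of $\mathcal{C}$, the flocks $V_i$ dominates, in two (through the dominating chickens of $V_i$). Any other flock either fails to dominate some $C \in \mathcal{C}$, in which case each of its chickens is pecked by a chicken of $C$ and is reached in three; or it dominates every flock of $\mathcal{C}$, in which case it cannot also dominate $V_i$---it would then dominate strictly more flocks than $V_i$, contradicting maximality---so it is balanced with $V_i$ and each of its chickens is reached in two via $V_i$. (One also uses the observation that two flocks can never dominate each other, which is what guarantees the dichotomy.) This maximality argument is what replaces your unfinished inductive step.
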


\begin{proof}
Consider the case in which there is some flock \(V_i\) which is balanced with or dominates every other flock.
Then we may consider the bi-flock graph with \(V_i\) as one flock and \(V \setminus V_i\) as the other flock, disregarding the edges within $V \setminus V_i$.
See Figures~\ref{fig:combine-flocksA} and \ref{fig:combine-flocksB} for depictions of this case.
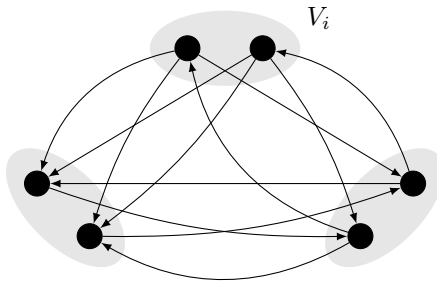
\begin{figure}[ht]
    \centering
    \begin{tikzpicture}
        \fill[black!10] (0,0) ellipse (1 and 0.5) ;
        \node (Vi) at (1.25,0.4) {\(V_i\)} ;
        
        \fill[black!10,rotate=45] (0,-3) ellipse (1 and 0.5) ;
        \fill[black!10,rotate=-45] (0,-3) ellipse (1 and 0.5) ;
        
        \node[draw,circle,fill] (a) at (-0.5,0) {} ;
        \node[draw,circle,fill] (b) at (0.5,0) {} ;
        
        \def\x{2.5}
        \def\y{-1.8}
        \def\ep{0.7}
        \pgfmathsetmacro\ox{\x-\ep}
        \pgfmathsetmacro\oy{\y-\ep}

        \node[draw,circle,fill] (c) at (\x,\y) {} ;
        \node[draw,circle,fill] (d) at (\ox,\oy) {} ;
        
        \node[draw,circle,fill] (e) at (-\ox,\oy) {} ;
        \node[draw,circle,fill] (f) at (-\x,\y) {} ;

        \draw[-latex] (a) to[bend right=10] (e);
        \draw[-latex] (a) to[bend right] (f);
        \draw[-latex] (b) to[bend left=10] (e);
        \draw[-latex] (b) to (f);
        
        \draw[-latex] (a) to (c);
        \draw[-latex] (d) to[bend left] (a);
        \draw[-latex] (c) to[bend right] (b);
        \draw[-latex] (b) to[bend left=10] (d);
        
        \draw[-latex] (e) to[bend right=10] (c);
        \draw[-latex] (c) to (f);
        \draw[-latex] (f) to[bend right=10] (d);
        \draw[-latex] (d) to[bend left] (e);
        
    \end{tikzpicture}
    \caption{\(V_i\) is balanced with or dominates the remaining flocks.}
    \label{fig:combine-flocksA}
\end{figure}
\begin{figure}[ht]
    \centering
    \begin{tikzpicture}
        \fill[black!10] (0,0) ellipse (1 and 0.5) ;
        \node (Vi) at (1.25,0.4) {\(V_i\)} ;
        \node (Vminus) at (3.5,-2.5) {\(V \setminus V_i\)} ;
        
        
        \fill[black!10] (0,-2) ellipse (3 and 1) ;
        
        \node[draw,circle,fill] (a) at (-0.5,0) {} ;
        \node[draw,circle,fill] (b) at (0.5,0) {} ;
        
        \def\x{2.35}
        \def\y{-1.8}
        \def\ep{0.7}
        \pgfmathsetmacro\ox{\x-\ep}
        \pgfmathsetmacro\oy{\y-\ep}

        \node[draw,circle,fill] (c) at (\x,\y) {} ;
        \node[draw,circle,fill] (d) at (\ox,\oy) {} ;
        
        \node[draw,circle,fill] (e) at (-\ox,\oy) {} ;
        \node[draw,circle,fill] (f) at (-\x,\y) {} ;

        \draw[-latex] (a) to[bend right=10] (e);
        \draw[-latex] (a) to[bend right] (f);
        \draw[-latex] (b) to[bend left=10] (e);
        \draw[-latex] (b) to (f);
        
        \draw[-latex] (a) to (c);
        \draw[-latex] (d) to[bend left] (a);
        \draw[-latex] (c) to[bend right] (b);
        \draw[-latex] (b) to[bend left=10] (d);

    \end{tikzpicture}
    \caption{\(V \setminus V_i\) as a single flock in Theorem~\ref{thm:nflock-exists-3Duke}.}
    \label{fig:combine-flocksB}
\end{figure}
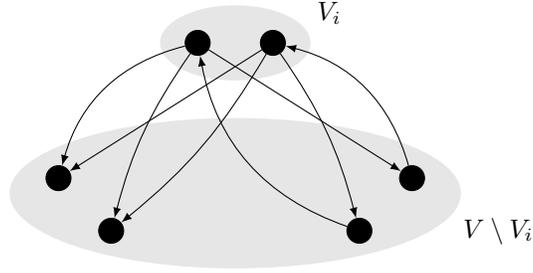
Either this pair of flocks is balanced, in which case Corollary~\ref{cor:balanced-bi-flock-3-Duke} guarantees $V_i$ contains a 3-Duke, or \(V_i\) dominates \(V \setminus V_i\) and therefore still contains a 3-Duke.
Adding the edges within \(V \setminus V_i\) back in, the 3-Duke in \(V_i\) will stay a 3-Duke.

For the remainder of the proof, consider the case in which any given flock is dominated by some other flock.
Let $V_i$ be a flock which dominates the most flocks, and let $d$ be a chicken that dominates \(V_i\).
We claim that \(d\) is a 3-Duke.

Let \(\mathcal{C}\) be the collection of flocks which $V_i$ dominates.
The remaining flocks---namely, those balanced with or dominating \(V_i\)---have one of two properties: either they do not dominate all flocks in $\mathcal{C}$, or they do.
Let \(\mathcal{A}\) be the set of remaining flocks which do not dominate every flock in \(\mathcal{C}\), and let $\mathcal{B}$ be the set of remaining flocks which dominate every flock in $\mathcal{C}$.
For any flock \(A \in \mathcal{A}\) and any chicken \(a \in A\), there exists a flock \(C \in \mathcal{C}\) and a chicken \(c \in C\) such that \(c \rightarrow a\).
No flock \(B \in \mathcal{B}\) can also dominate $V_i$, or else \(B\) would dominate more flocks than $V_i$, so \(V_i\) is balanced with every flock in \(\mathcal{B}\)

\begin{figure}[ht]
    \centering
    \begin{tikzpicture}
    
        \draw[dashed] (-2,1) ellipse (1 and 0.5) ;
        \draw[dashed] (0,-2) ellipse (2 and 0.5) ;
        \draw[dashed] (2,1) ellipse (1 and 0.5) ;
        
        \fill[black!10] (0,-2) ellipse (0.3 and 0.2) ;
        \fill[black!10] (-1,-2) ellipse (0.3 and 0.2) ;
        \fill[black!10] (1,-2) ellipse (0.3 and 0.2) ;
        
        \fill[black!10] (2,1) ellipse (0.3 and 0.2) ;
    
        \fill[black!10] (0,-0.25) ellipse (1 and 0.5) ;
        \fill[black!10] (-2,1) ellipse (0.6 and 0.4) ;
        
        \node at (0,-0.25) {\(V_i\)} ;
        \node at (2.5,-2) {\(\mathcal{C}\)} ;
        \node[draw, circle] at (-2,1) {\(d\)} ;
        \node at (-3.5,1) {\(\mathcal{A}\)} ;
        \node at (3.5,1) {\(\mathcal{B}\)} ;
        
        \draw[-latex] (-1.25,0.55) -- (-0.75,0.2) ;
        \draw[-latex] (0,-0.85) -- (0,-1.35) ;
        \draw[-latex] (-1.5,-1.35) -- (-1.7,0.4) ;
        \draw[-latex] (0.75,0.2) -- (1.25,0.55) ;
        
    \end{tikzpicture}
    \caption{The final case in Theorem~\ref{thm:nflock-exists-3Duke}.}
    \label{fig:first-thm-final-case}
\end{figure}
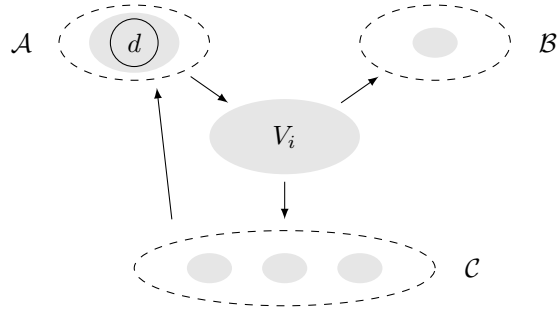

We will now show that $d$ has peck chains of length 3 or less to all chickens not in its flock.
See Figure~\ref{fig:first-thm-final-case} for a diagram illustrating these peck chains.
Since $d$ dominates $V_i$, there is a chain of length 1 from $d$ to all chickens in $V_i$.
Furthermore, as $V_i$ dominates every flock in $\mathcal{C}$, there is a  chain of length at most 2 from $d$ to every chicken in the flocks of \(\mathcal{C}\).
Any chicken in a flock of \(\mathcal{A}\) is pecked by a chicken in a flock of \(\mathcal{C}\), so there is a chain of length 3 from $d$ to all chickens in $\mathcal{A}$.
As $V_i$ is balanced with each flock in $\mathcal{B}$, there is a chain of length 2 from $d$ to all chickens in $B$, by way of \(V_i\).
Because there are peck chains of length 3 or less from $d$ to all chickens in our graph, $d$ is a 3-Duke.
\end{proof}

There we have it!
Every multi-flock chicken graph contains a 3-Duke.
We will eventually generalize Theorem~\ref{thm:bi-flock-1Duke-or43Dukes} to multi-flock chicken graphs---see Theorem~\ref{thm:multiflock-finale} below.

At the base of many of Maurer's proofs was the following result: any chicken that is pecked, is pecked by a King.
Can we prove the same for Dukes? 
The proof is not immediately obvious.
In Maurer's proof, he considered a pecked chicken \(c\) and separated the remaining chickens into two groups: chickens pecking \(c\), and chickens pecked by \(c\).
He argued that the chickens which pecked $c$ were Kings over the chickens pecked by $c$, and that they would have a King among them, maKing that King a King over all of the chickens.
Our proof is not so simple, as we have a third group: chickens in $c$'s flock.

\begin{theorem}
\label{thm:any-chicken-pecked}
Any chicken pecked is either pecked by a 3-Duke, or shares a flock with a 2-Duke.
\end{theorem}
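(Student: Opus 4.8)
The plan is to adapt Maurer's tripartition argument. Fix a pecked chicken \(c\), let \(A\) be the chickens that peck \(c\) and let \(B\) be the chickens pecked by \(c\), so that \(V \setminus F_c = A \sqcup B\), with \(F_c\) itself playing the role of Maurer's missing third group. The immediate observations mirror his proof: every \(a \in A\) reaches \(c\) in one peck and reaches every chicken of \(B \setminus F_a\) in two (via \(a \rightarrow c \rightarrow b\)), so each chicken of \(A\) is a 2-Duke over \(B\).

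Next I would apply Theorem~\ref{thm:nflock-exists-3Duke} to the sub-multi-flock graph induced on \(A\), obtaining a chicken \(d \in A\) that is a 3-Duke of that subgraph. This \(d\) already pecks \(c\); it reaches every chicken of \(A \setminus F_d\) in at most three pecks (being a 3-Duke of \(A\)) and every chicken of \(B\) it must reach in two (through \(c\)). Since \(F_d \neq F_c\), the only chickens \(d\) might fail to reach lie in \(F_c \setminus \{c\}\). For such an \(x\), if any \(b \in B\) pecks \(x\) then \(d \rightarrow c \rightarrow b \rightarrow x\) is a peck chain of length three. Hence the only obstructions are the \emph{stubborn} chickens; let \(X\) denote the set of chickens in \(F_c \setminus \{c\}\) pecked by no chicken of \(B\). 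Equivalently, each \(x \in X\) has all its in-neighbours in \(A\) and therefore pecks every chicken of \(B\). If \(X = \emptyset\), then \(d\) is a 3-Duke pecking \(c\) and we are done.

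The remaining work is to show that when \(X \neq \emptyset\) the flock \(F_c\) contains a 2-Duke. Let \(D\) be the set of chickens in \(A\) pecked by no chicken of \(B\); these dominate \(B\) outside their flocks and are exactly the chickens \(c\) cannot reach in two pecks. If \(D = \emptyset\), then \(c\) reaches all of \(A\) in two and all of \(B\) in one, so \(c\) itself is a 2-Duke. More generally, any chicken of \(F_c\) that pecks all of \(B\) (such as a stubborn \(x\), or \(c\)) reaches \(A \setminus D\) in two pecks through \(B\), hence is a 2-Duke as soon as it also pecks every chicken of \(D\); so if the chickens of \(F_c\) pecking all of \(B\) collectively dominate \(D\), we again find a 2-Duke.

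I expect the genuine obstacle to be the final case: stubborn chickens exist, yet no chicken of \(F_c\) pecking all of \(B\) dominates \(D\), so every such candidate is itself pecked from within \(D\). Here the length-three budget is tight, because reaching a stubborn \(x\) forces a chain \(d \rightarrow \cdots \rightarrow a' \rightarrow x\) whose penultimate chicken \(a' \in A\) must be reached in only two pecks. The most promising route is to apply Theorem~\ref{thm:nflock-exists-3Duke} instead to the sub-multi-flock graph induced on \(A \cup X\): any 3-Duke of this graph lying in \(A\) reaches all of \(X\) as well, and combined with its reach to \(B\) and to the non-stubborn part of \(F_c\) it becomes a genuine 3-Duke pecking \(c\). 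The crux is to rule out, or else exploit, the possibility that this 3-Duke lands inside \(X\); I would attack this by selecting a non-eclipsed such chicken and arguing that a 3-Duke of \(A \cup X\) forced into \(X\) must in fact dominate \(D\), contradicting the standing case and thereby delivering either the desired 3-Duke in \(A\) or a 2-Duke in \(F_c\).
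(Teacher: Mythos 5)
Your setup is sound and runs parallel to the paper's: split $V \setminus F_c$ into the in-neighbourhood $A$ and out-neighbourhood $B$ of $c$, extract a 3-Duke $d$ of the induced graph on $A$ via Theorem~\ref{thm:nflock-exists-3Duke}, and observe that the only chickens $d$ can fail to reach are those $x \in F_c \setminus \{c\}$ pecked by no chicken of $B$ (your set $X$). But the proof stops exactly where the real work begins: your final paragraph is an announced plan, not an argument, and the plan has a hole. A 3-Duke of the induced graph on $A \cup X$ that lands in $X$ reaches the chickens of your set $D$ by peck chains of length up to three routed through $A$, so there is no reason it should peck all of $D$ directly; hence it need not be a 2-Duke, and the hoped-for contradiction with ``no chicken of $F_c$ pecking all of $B$ dominates $D$'' does not materialize. (A smaller slip: the chickens of $F_c$ ``collectively'' dominating $D$ gives nothing --- a 2-Duke requires a \emph{single} chicken pecking all of $B$ and all of $D$.)

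The missing ingredient, which the paper supplies, is an extremal choice combined with a chain-shifting observation. The paper takes $d$ to be a 3-Duke of $A$ that pecks the most chickens \emph{in the whole graph}, and notes that if $d$ fails to be a 3-Duke overall, then some single obstruction $b' \in F_c$ must peck every chicken of $B$ \emph{and} every chicken of $A$ that $d$ reaches in at most two pecks. Consequently $b'$ reaches in two pecks everything $d$ reaches in three (replace the first step of $d$'s chain by the direct peck from $b'$), and the extremal choice of $d$ handles $F_d$: any $x \in F_d$ blocking $b'$ would peck $b'$ together with everything $d$ pecks, making $x$ a 3-Duke of $A$ that out-pecks $d$, a contradiction. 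So the obstruction itself is the promised 2-Duke in $c$'s flock. Without some mechanism of this kind that converts the failure of $d$ into the desired 2-Duke, your case analysis cannot close.
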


\begin{proof}
Consider a chicken $c$, and let $\mathcal{A}$ be the nonempty set of chickens which peck $c$.
As $\mathcal{A}$ itself is a multi-flock chicken graph, $\mathcal{A}$ must contain a 3-Duke by Theorem~\ref{thm:nflock-exists-3Duke}.
Let $d$ be a 3-Duke over all of \(\mathcal{A}\) which pecks the most chickens in the overall graph.
If we can show $d$ is a 3-Duke over the entire graph, then we are done.
Otherwise, we must show that there is a 2-Duke in $c$'s flock.

We have two remaining sets of chickens to consider.
Let $\mathcal{K}$ be the set of chickens which $c$ pecks, and let $\mathcal{B}$ be the set of chickens in $c$'s flock, excluding $c$.
See Figure~\ref{fig:thm-any-chicken-pecked} for a depiction.
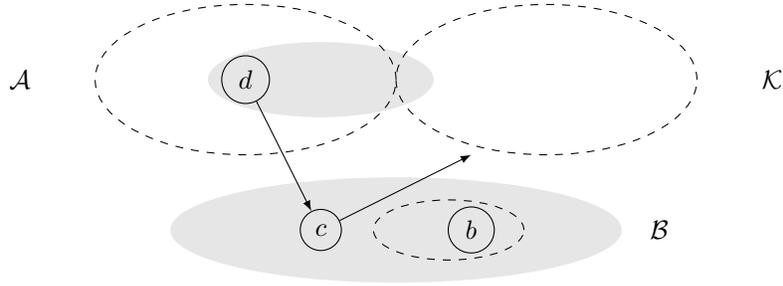
\begin{figure}[ht]
    \centering
    \begin{tikzpicture}
        
        \fill[black!10] (-1,0) ellipse (1.5 and 0.5) ;
        \fill[black!10] (0,-2) ellipse (3 and 0.7) ; 
        
        \draw[dashed] (-2,0) ellipse (2 and 1) ; 
        \draw[dashed] (0.7,-2) ellipse (1 and 0.4) ; 
        \draw[dashed] (2,0) ellipse (2 and 1) ; 
        
        \node[draw, circle] (d) at (-2,0) {$d$} ;
        \node[draw,circle] (c) at (-1,-2) {$c$} ;
        \node[draw, circle] (b) at (1,-2) {$b$} ;

        \node (A) at (-5,0) {$\mathcal{A}$} ;
        \node (B) at (3.5,-2) {$\mathcal{B}$} ;
        \node (K) at (5,0) {$\mathcal{K}$} ;
        
        \draw[-latex] (d) -- (c) ;
        \draw[-latex] (c) -- (1,-1) ;
        
    \end{tikzpicture}
    \caption{Initial set-up in Theorem~\ref{thm:any-chicken-pecked}.}
    \label{fig:thm-any-chicken-pecked}
\end{figure}
We know $d \rightarrow c$ and, consequently, $d$ is a 2-Duke over $\mathcal{K}$.
Consider an arbitrary chicken \(b \in \mathcal{B}\).
Observe that $d$ is a 3-Duke over $b$ if $b$ is pecked by any chicken in $\mathcal{K}$, or if $b$ is pecked by any chicken in $\mathcal{A}$ that $d$ is a 2-Duke over.

Suppose $d$ is not a 3-Duke in the overall graph.
Then there is some chicken $b' \in \mathcal{B}$ who pecks every chicken in $\mathcal{K}$ and every chicken in $\mathcal{A}$ that $d$ is a 2-Duke over.
See Figure~\ref{fig:thm-any-chicken-pecked2} for a depiction.
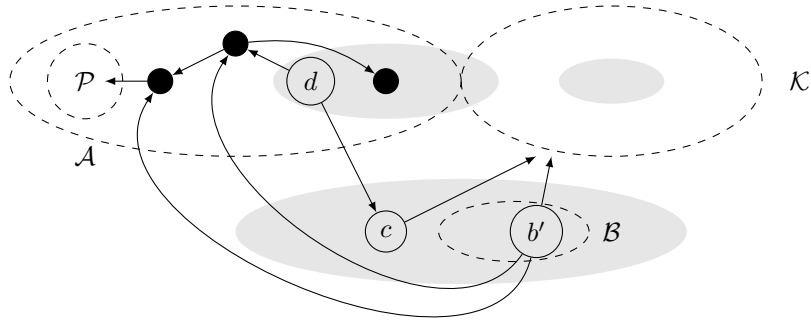
\begin{figure}[ht]
    \centering
    \begin{tikzpicture}
        
        \fill[black!10] (-1,0) ellipse (1.5 and 0.5) ;
        \fill[black!10] (0,-2) ellipse (3 and 0.7) ; 
        \fill[black!10] (2,0) ellipse (0.7 and 0.3) ;
        
        \draw[dashed] (-3,0) ellipse (3 and 1) ; 
        \draw[dashed] (0.7,-2) ellipse (1 and 0.4) ; 
        \draw[dashed] (2,0) ellipse (2 and 1) ; 
        \draw[dashed] (-5,0) ellipse (0.5 and 0.5) ; 
        
        \node[draw, circle] (d) at (-2,0) {$d$} ;
        \node[draw,circle] (c) at (-1,-2) {$c$} ;
        \node[draw, circle] (b) at (1,-2) {$b'$} ;
        
        \node[draw, circle, fill] (d1) at (-3,0.5) {} ;
        \node[draw, circle, fill] (d2) at (-4,0) {} ;
        
        \node (A) at (-5,-1) {$\mathcal{A}$} ;
        \node (B) at (2,-2) {$\mathcal{B}$} ;
        \node (K) at (4.5,0) {$\mathcal{K}$} ;
        \node (P) at (-5,0) {$\mathcal{P}$} ;
        
        \node[draw,fill,circle] (oops) at (-1,0) {} ;
        
        \draw[-latex] (d) -- (c) ;
        \draw[-latex] (c) -- (1,-1) ;
        \draw[-latex] (b) -- (1.2,-1) ;
        \draw[-latex] (d) -- (d1) ;
        \draw[-latex] (d1) -- (d2) ;
        \draw[-latex] (b) to [bend left=90] (d1) ;
        \draw[-latex] (b) to [bend left = 100] (d2) ;
        \draw[-latex] (d2) to (P) ;
        \draw[-latex] (d1) to [bend left = 20] (oops) ;
        
    \end{tikzpicture}
    \caption{Final situation in Theorem~\ref{thm:any-chicken-pecked}.}
    \label{fig:thm-any-chicken-pecked2}
\end{figure}
We claim $b'$ is a 2-Duke.
Let $\mathcal{P}$ be the set of chickens in $\mathcal{A}$ to which $d$ has peck chains entirely within \(\mathcal{A}\) of length exactly 3.
It remains to be seen that $b'$ is a 2-Duke over \(\mathcal{P}\) and \(F_d\). 
Since $b'$ pecks all the chickens who $d$ has peck chains of length at most 2 to, $b'$ has peck chains of length at most 2 to all the chickens in \(\mathcal{P}\).
As for the chickens in $F_d$, because $b'$ pecks everything $d$ pecks, if there were some chicken in $d$'s flock that $b'$ was not a 2-Duke over, it would have to peck $b'$ and all of the chickens $d$ pecks, which is a contradiction, because $d$ pecks the most chickens among all 3-Dukes in $\mathcal{A}$.
Therefore, as $b'$ has a peck chain of length $2$ or less to all other chickens not in its flock, $b'$ is a 2-Duke.
\end{proof}

This theorem is of particular interest, as 2-Dukes aren't guaranteed.
In fact, we can use this to immediately show a neat result!

\begin{corollary}
\label{cor:2d-or-4-3d}
In any multi-flock graph,
either there is a 2-Duke, or there are three 3-Dukes.
\end{corollary}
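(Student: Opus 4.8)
The plan is to assume there is no 2-Duke and from this produce three distinct 3-Dukes; if a 2-Duke already exists there is nothing to prove. The first observation I would record is that the absence of a 2-Duke forces the absence of a 1-Duke, since every 1-Duke is in particular a 2-Duke. I would then note that this makes \emph{every} chicken pecked: a chicken that no one pecks would, by the definition of a multipartite tournament, peck every chicken outside its flock and hence be a 1-Duke (transmitter), contradicting what was just observed.

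With every chicken pecked, I would invoke Theorem~\ref{thm:any-chicken-pecked}. That theorem says any pecked chicken is either pecked by a 3-Duke or shares a flock with a 2-Duke; since we have assumed there are no 2-Dukes, the second alternative never occurs, so every chicken is pecked by a 3-Duke. In particular, every 3-Duke is pecked by a 3-Duke. Let $D$ denote the set of all 3-Dukes, which is nonempty by Theorem~\ref{thm:nflock-exists-3Duke}, and consider the peck relation restricted to $D$. What we have shown is exactly that every vertex of this induced digraph has an in-neighbor inside $D$. Following arcs backward in a finite digraph with this property must eventually revisit a vertex, so $D$ contains a directed cycle; because there are no loops and at most one arc between any two chickens (there are no digons), every such cycle has length at least $3$, forcing $|D| \geqslant 3$.

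Equivalently, and perhaps cleaner to write up, I would rule out $|D| \le 2$ directly: if $D = \{d\}$ then $d$ would have to be pecked by $d$ itself, and if $D = \{d_1,d_2\}$ then each of $d_1,d_2$ must be pecked by the other (the only other candidate 3-Duke), which is impossible whether they lie in the same flock (no arc between them) or in different flocks (only one arc between them). The only real subtlety---and the step I would be most careful about---is the translation itself: recognizing that ``three 3-Dukes'' is equivalent to the existence of a short directed cycle among 3-Dukes, and that the no-digon property of tournaments is precisely what upgrades ``a cycle exists'' to ``a cycle of length at least three exists.'' Everything feeding this reduction (no 1-Duke, hence every chicken pecked, hence pecked by a 3-Duke) is routine given the two prior theorems.
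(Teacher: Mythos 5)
Your proposal is correct and follows essentially the same route as the paper: assume no 2-Duke, deduce no 1-Duke, use Theorem~\ref{thm:any-chicken-pecked} to conclude every 3-Duke is pecked by a 3-Duke, and then use the absence of digons to extract three distinct 3-Dukes (the paper does this as an explicit two-step backward chase rather than your cycle argument, but it is the same idea).
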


\begin{proof}
Suppose there is no 2-Duke.
Then there is no 1-Duke either.
By Theorem~\ref{thm:any-chicken-pecked}, any chicken pecked is pecked by a 3-Duke.
Consider a 3-Duke guaranteed by Theorem~\ref{thm:nflock-exists-3Duke}.
Because our 3-Duke must be pecked (or else it would be a 1-Duke), we know there must exist another 3-Duke which pecks it.
That 3-Duke cannot be pecked by our original 3-Duke, and therefore must be pecked by a third 3-Duke.
\end{proof}

Now we can move into proving a greater theory of Dukes, a more comprehensive map of how many Dukes of which types must always exist.
We already know that there exist graphs with only four 3-Dukes, and no other Dukes---Figure~\ref{fig:biflock-4-3ds}.
Let us show that there will always be four 3-Dukes; or else there exists at least one 2-Duke.
Why have this clause, you may wonder?
It is well shown by Figure~\ref{fig:4-flock-3-2-Dukes}, in which only three 2-Dukes, and no other Dukes, exist.
\begin{figure}[ht]
    \centering
    \begin{tikzpicture}
        \coordinate (fa) at (-2,2) ;
        \coordinate (fb) at (0.7,2) ;
        \coordinate (fc) at (2,1) ;
        \coordinate (e) at (0,-2) ;
    
        \fill[black!10] (fa) ellipse (0.5 and 0.3) ;
        \fill[black!10] (fb) ellipse (0.5 and 0.3) ;
        \fill[black!10] (fc) ellipse (0.5 and 0.3) ;
        \fill[black!10] (e) circle (1) ;
        
        \node[draw,circle,fill] (a) at (fa) {} ;
        \node[draw,circle,fill] (b) at (fb) {} ;
        \node[draw,circle,fill] (c) at (fc) {} ;
        
        \draw[-latex] (a) to [bend left=10] (b) ;
        \draw[-latex] (c) to (a) ;
        \draw[-latex] (b) to [bend left=10] (c) ;
        
        \draw[-latex] (a) to (-0.3,-1.05) ;
        \draw[-latex] (b) to (0.1,-1) ;
        \draw[-latex] (c) to (0.3,-1.05) ;

    \end{tikzpicture}
    \caption{A 4-flock graph with exactly three 2-Dukes, each pecking every chicken in the bottom flock.}
    \label{fig:4-flock-3-2-Dukes}
\end{figure}
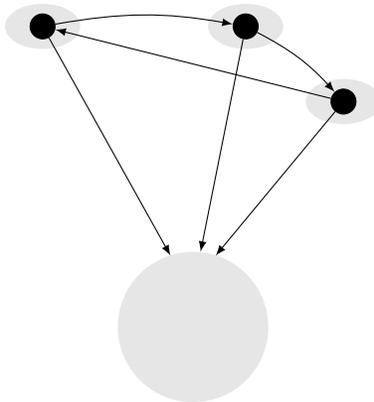
Now we can continue worKing our way up and showing the existence of more Dukes. 
In the case where there are no 2-Dukes, we have shown in Corollary~\ref{cor:2d-or-4-3d} that there exist three 3-Dukes.
All that remains is to show the existence of one more.

\begin{theorem}
\label{thm:exists-fourth-3Duke}
In a multi-flock graph with no 2-Dukes and at least three 3-Dukes, there must exist a fourth 3-Duke.
\end{theorem}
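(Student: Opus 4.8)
The plan is to prove the slightly stronger statement that a multi-flock graph with no 2-Dukes has at least four 3-Dukes; since Corollary~\ref{cor:2d-or-4-3d} already supplies three under this hypothesis, this is equivalent to the theorem as stated. Throughout I assume there is no 2-Duke, hence no 1-Duke, so every chicken is pecked and, by Theorem~\ref{thm:any-chicken-pecked}, every chicken is in fact pecked by a 3-Duke. First I would reconstruct three 3-Dukes as a pecking chain, exactly as in Corollary~\ref{cor:2d-or-4-3d}: take a 3-Duke \(d_1\) from Theorem~\ref{thm:nflock-exists-3Duke}; it is pecked, so some 3-Duke \(d_2\) pecks it; likewise a 3-Duke \(d_3\) pecks \(d_2\), with \(d_3 \neq d_1\) since \(d_1\) does not peck \(d_2\); and finally a 3-Duke \(d_4\) pecks \(d_3\), with \(d_4 \notin \{d_2, d_3\}\) because \(d_3\) pecks \(d_2\) and nothing pecks itself. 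If \(d_4 \neq d_1\) we already have four distinct 3-Dukes and are done. The only surviving case is \(d_4 = d_1\); then \(d_1 \to d_3 \to d_2 \to d_1\) is a directed triangle, and since each arc is a peck, \(d_1, d_2, d_3\) lie in three pairwise distinct flocks.

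Before attacking the triangle case I would record a short helper: if \(e\) eclipses a 3-Duke \(d\), then \(e\) is itself a 3-Duke. Indeed \(N^+(d) \subseteq N^+(e)\), so any peck chain \(d \to x_1 \to x_2 \to x_3\) reproduces as \(e \to x_1 \to x_2 \to x_3\) (here \(x_1 \in N^+(d)\) lies outside \(F_d = F_e\), so \(e \to x_1\) is a genuine peck), and this covers every target outside \(F_e\). Consequently, in the triangle case, if any \(d_i\) were eclipsed, its eclipser would be a 3-Duke inside \(F_{d_i}\); as the other two Dukes occupy different flocks, this eclipser is a fourth 3-Duke. Hence I may assume each of \(d_1, d_2, d_3\) is non-eclipsed.

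It then remains to manufacture a fourth 3-Duke from a non-eclipsed directed triangle in a graph with no 2-Dukes, and here I would exploit the failure of each \(d_i\) to be a 2-Duke. For \(d_1\), let \(U_1\) be the nonempty set of chickens outside \(F_{d_1}\) not reachable from \(d_1\) in at most two pecks; every \(z \in U_1\) satisfies \(z \to d_1\) (it is not even reached in one peck). Since \(z\) is pecked, Theorem~\ref{thm:any-chicken-pecked} says \(z\) is pecked by a 3-Duke, necessarily one of \(d_1, d_2, d_3\); the triangle orientation then pins this down. It cannot be \(d_1\), since \(z \to d_1\); and it cannot be the Duke that \(d_1\) pecks, say \(d'\), since \(d_1 \to d' \to z\) would be a length-\(2\) chain contradicting \(z \in U_1\); hence \(z\) is pecked by the Duke that pecks \(d_1\). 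Running this for all three Dukes simultaneously produces a rigid cyclic incidence pattern: each \(U_i\) is pecked entirely by the in-neighbor of \(d_i\) in the triangle, and by no other Duke.

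The main obstacle is the last step: converting this rigid pattern into an actual fourth 3-Duke (equivalently, a contradiction with there being exactly three). I expect to achieve this by combining the cyclic pattern with non-eclipsedness, arguing that the in-neighbor Duke, which must peck all of the corresponding \(U_i\) together with the next Duke, forces one of two outcomes inside the flocks \(F_{d_1}, F_{d_2}, F_{d_3}\): either some resident chicken pecks everything its Duke pecks and more, yielding an eclipser and hence a fourth 3-Duke by the helper above, or some chicken's guaranteed two-peck reach closes up into a forbidden 2-Duke. Isolating precisely which chicken is promoted, and verifying its length-\(3\) reach to every other flock at once, is the delicate part, and the step on which I would concentrate the most effort.
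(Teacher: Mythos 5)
Your setup is sound as far as it goes: forcing the three 3-Dukes into a directed triangle, the helper that an eclipser of a 3-Duke is again a 3-Duke (this is essentially the paper's lemma on non-eclipsed Dukes), and the deduction that each $U_i$ is pecked only by the in-neighbour of $d_i$ in the triangle are all correct. But the proof is not finished, and you say so yourself: the entire content of the theorem is concentrated in the final step, where the ``rigid cyclic incidence pattern'' must be converted into a fourth 3-Duke, and that step is left as a plan (``I expect to achieve this by\ldots''). The sketched dichotomy---either some flock-mate eclipses its Duke or some chicken closes up into a forbidden 2-Duke---is neither established nor obviously exhaustive, and no candidate fourth Duke is ever identified. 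Concretely, the natural candidate, a chicken $z \in U_1$, pecks $d_1$ and every out-neighbour of $d_1$ \emph{outside $F_z$}; if $F_z$ meets $N^{+}(d_1)$, the length-$3$ chains out of $d_1$ do not lift to chains out of $z$, and nothing in your pattern excludes this. Reaching the chickens of $F_{d_1}$ from $z$ likewise needs an argument (non-eclipsedness of $d_1$ gets you partway, but again only through out-neighbours of $d_1$ that avoid $F_z$). As it stands this is a correct reduction to an unproved claim, not a proof.

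For what it is worth, the paper closes the argument with a different and much shorter device that bypasses the $U_i$ analysis entirely. With the triangle $d_1 \rightarrow d_2 \rightarrow d_3 \rightarrow d_1$, there must be a chicken $c \notin \{d_1,d_2,d_3\}$ pecking $d_1$, since otherwise $d_1$ would be a 2-Duke (it pecks everything outside its flock except $d_3$, which it reaches via $d_1 \rightarrow d_2 \rightarrow d_3$). Now delete $d_3$ and apply Theorem~\ref{thm:any-chicken-pecked} to the resulting multi-flock graph, in which $d_1$ is still pecked: either some chicken pecking $d_1$ is a 3-Duke there, and it remains one after $d_3$ is restored because of the chain $c \rightarrow d_1 \rightarrow d_2 \rightarrow d_3$; or $d_1$'s flock contains a 2-Duke of the smaller graph, which becomes a fourth 3-Duke upon restoration because it reaches $d_2$ in at most two pecks and hence $d_3$ in at most three. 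You could graft this deletion step onto your setup in place of your final paragraph.
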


\begin{proof}
Suppose there is no 2-Duke and that there are at least three 3-Dukes, \(d_1, d_2\), and \(d_3\).
Suppose, for a contradiction that these are the only 3-Dukes in the graph.
Without loss of generality, we can assume \(d_1 \rightarrow d_2 \rightarrow d_3 \rightarrow d_1\) by Theorem~\ref{thm:any-chicken-pecked}.

Now consider the chickens other than \(d_3\) which peck $d_1$.
If no such chickens exist, then $d_1$ would be a 2-Duke, as any chicken not in its flock is either directly pecked by $d_1$, or is $d_3$, in which case we have $d_1 \rightarrow d_2 \rightarrow d_3$.
By assumption, this cannot happen.

This means there must be some chicken which pecks $d_1$ and is not one of our three original 3-Dukes.
Now let \(\mathcal{G}\) be the multi-flock graph obtained by removing \(d_3\).
By Theorem~\ref{thm:any-chicken-pecked}, either one of the chickens $c$ who pecks $d_1$ must be a 3-Duke in \(\mathcal{G}\), or $d_1$ shares a flock with a 2-Duke of \(\mathcal{G}\).

\underline{Case 1:} $c$ is a 3-Duke in \(\mathcal{G}\).
When we add $d_3$ back in,
because $c \rightarrow d_1 \rightarrow d_2 \rightarrow d_3$, we see that $c$ remains a 3-Duke, and we have found a fourth 3-Duke.

\underline{Case 2:} $d_1$ shares a flock with a 2-Duke in \(\mathcal{G}\).
Let $t$ be a 2-Duke of \(\mathcal{G}\) which is also in $d$'s flock.
Then $t$ has peck chains of length at most two to every chicken besides $d_3$.
When we add $d_3$ back in, $t$ will still have a peck chain of length at most 2 to $d_2$, and therefore peck chains of length at most 3 to $d_3$ and every other chicken.
Thus, $t$ is a fourth 3-Duke.
\end{proof}

\begin{corollary}
\label{cor:either-2D-or-3Ds}
In any multi-flock graph, either there is a 2-Duke, or there are four 3-Dukes.
\end{corollary}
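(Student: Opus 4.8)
The plan is to obtain this statement as an immediate consequence of the two results just established, namely Corollary~\ref{cor:2d-or-4-3d} and Theorem~\ref{thm:exists-fourth-3Duke}. The only case that requires any argument is the one in which the graph contains no 2-Duke, since whenever a 2-Duke is present the first alternative of the disjunction already holds and there is nothing to prove. So I would begin by assuming that the multi-flock graph has no 2-Duke, with the aim of producing four 3-Dukes.

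Under that assumption, Corollary~\ref{cor:2d-or-4-3d} immediately supplies at least three 3-Dukes. This places us squarely within the hypotheses of Theorem~\ref{thm:exists-fourth-3Duke}: we have a multi-flock graph with no 2-Duke and at least three 3-Dukes. Applying that theorem then yields a fourth 3-Duke, so the graph contains four 3-Dukes, establishing the second alternative of the disjunction and completing the case split.

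I expect no genuine obstacle at this stage, since the substantive combinatorial content has already been absorbed into the preceding corollary and theorem; the proof is purely a matter of checking that their hypotheses chain together correctly. The one point worth watching is the bookkeeping of the count: one must confirm that the 3-Duke produced by Theorem~\ref{thm:exists-fourth-3Duke} is genuinely a \emph{fourth} Duke distinct from the three already in hand, rather than a repetition. This is precisely what the theorem asserts, as it is stated as guaranteeing a fourth 3-Duke given three, so the final tally of four is honest.
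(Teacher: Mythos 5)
Your proposal is correct and is essentially identical to the paper's own proof: assume no 2-Duke exists, invoke Corollary~\ref{cor:2d-or-4-3d} to obtain three 3-Dukes, then apply Theorem~\ref{thm:exists-fourth-3Duke} to obtain a fourth. No differences worth noting.
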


\begin{proof}
We know by Corollary~\ref{cor:2d-or-4-3d} that either there exists a 2-Duke or three 3-Dukes.
By Theorem~\ref{thm:exists-fourth-3Duke}, if there is not 2-Duke, then four 3-Dukes exist.
Therefore, either there exists a 2-Duke or four 3-Dukes. 
\end{proof}

After all of these, we finally have a good base case to work from---either four 3-Dukes, or a 2-Duke. In fact, it is enough to move onto our longest proof. 
Just kidding! Let's first prove something about chickens eclipsing one another. 
Recall the definition of eclipses; a chicken eclipses another chicken in its flock if it pecks all chickens that the other chicken pecks, and at least one more. 

\begin{lemma}
Whenever there exists an $m$-Duke $d$ in some flock, there must exist a non-eclipsed $m$-Duke in that flock.
Additionally, if some chicken is pecked by an $m$-Duke $d$, it must also be pecked by a non-eclipsed $m$-Duke.
\end{lemma}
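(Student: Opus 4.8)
The plan is to isolate a single monotonicity principle and then extract the non-eclipsed Duke by a maximality argument. First I would prove the following claim, which is the engine behind both parts: if $d$ is an $m$-Duke and some chicken $e$ eclipses $d$, then $e$ is also an $m$-Duke. To see this, take any chicken $c \notin F_e = F_d$. Since $d$ is an $m$-Duke, there is a peck chain $d \rightarrow c_1 \rightarrow \cdots \rightarrow c$ of length at most $m$. Because $e$ eclipses $d$, $e$ pecks every chicken that $d$ pecks, so in particular $e \rightarrow c_1$. Splicing $e$ in for $d$ at the head of the chain yields $e \rightarrow c_1 \rightarrow \cdots \rightarrow c$, a peck chain of the same length at most $m$. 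Hence $e$ reaches every chicken outside its flock within $m$ steps, so $e$ is an $m$-Duke. The only point to check carefully here is that the surgery is legitimate: $c_1$ is pecked by $d$ and so lies outside $F_d = F_e$, and the tail of the chain is untouched, so no new constraint is introduced.

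For the first statement, I would consider the set of all $m$-Dukes contained in $F_d$, which is nonempty since $d$ belongs to it, and choose among them a chicken $e$ which pecks the most chickens in the whole graph. I claim $e$ is non-eclipsed. Indeed, if some chicken $e'$ in $F_d$ eclipsed $e$, then by definition $e'$ would peck strictly more chickens than $e$, and by the monotonicity claim $e'$ would itself be an $m$-Duke in $F_d$. This contradicts the maximality of $e$, so no such $e'$ exists and $e$ is the desired non-eclipsed $m$-Duke in the flock. Finiteness of the flock is what guarantees the maximum is attained.

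The second statement runs along the same lines, with the maximization restricted to the Dukes that peck the chicken in question. Suppose a chicken $x$ is pecked by an $m$-Duke $d$, and consider the set of $m$-Dukes in $F_d$ which peck $x$; again this is nonempty because it contains $d$. Choose $e$ among them pecking the most chickens overall. If $e'$ eclipsed $e$, then $e'$ would peck everything $e$ pecks---hence $e' \rightarrow x$---and $e'$ would be an $m$-Duke by the monotonicity claim, giving an $m$-Duke that pecks $x$ and pecks strictly more chickens than $e$, a contradiction. So $e$ is non-eclipsed, is an $m$-Duke, and pecks $x$, as required.

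I expect the main obstacle to be nothing computational but rather stating and justifying the monotonicity claim cleanly; once ``eclipsing preserves $m$-Dukeness'' is in hand, both conclusions follow from the finiteness of each flock via the selection of a Duke that pecks the most chickens, and the only step demanding genuine care is confirming that the spliced head vertex $c_1$ really lies outside the common flock so that the replaced peck chain is valid.
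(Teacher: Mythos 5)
Your proposal is correct and takes essentially the same route as the paper's proof: an extremal choice of a chicken that pecks the most, combined with the observation that a chicken eclipsing an $m$-Duke inherits $m$-Dukeness (which you justify more explicitly, via the splicing of the first edge, than the paper does). The only cosmetic difference is that you maximize over all $m$-Dukes in the flock (resp.\ those pecking the given chicken), whereas the paper maximizes over the chickens eclipsing $d$; both choices produce the desired non-eclipsed $m$-Duke at once.
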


\begin{proof}
If our $m$-Duke $d$ is non-eclipsed, we are done.
Otherwise, there must be finitely many chickens which eclipse it.
Consider one such chicken $e$ which pecks the most chickens.
As any chicken which eclipsed $e$ would also eclipse $d$, there can be no chicken which eclipses $e$.
As $e$ pecks all the chickens $d$ pecks, it must be an $m$-Duke.
\end{proof}

Okay, now we can move on to the longest proof in the paper. It's got a lot of cases; the main proof is broken into cases 1, 2, and 3. Each is then broken into sub-cases (a, b, ...) which in turn may contain their own sub-cases (i, ii, ...). Without further ado, we'll move into

\begin{lemma}
\label{thm:2Duke-pecked-or-else}
Consider a multi-flock graph with a {non-eclipsed} 2-Duke, $d$.
Then at least one of the following happens.
\begin{enumerate}[(i)]
    \item A 1-Duke exists.
    \item The 2-Duke $d$ is pecked by another 2-Duke.
    \item Three 2-Dukes exist.
    \item Four 3-Dukes exist.
\end{enumerate}
\end{lemma}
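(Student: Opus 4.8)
The plan is to assume that conclusions (i) and (ii) both fail and then to produce either three 2-Dukes, giving (iii), or four 3-Dukes, giving (iv). Since a 1-Duke is exactly a chicken that no chicken outside its flock pecks, the failure of (i) forces \emph{every} chicken---in particular \(d\)---to be pecked. Let \(\mathcal{A}\) be the nonempty set of chickens that peck \(d\); every member of \(\mathcal{A}\) lies outside \(F_d\), and the failure of (ii) means that no chicken in \(\mathcal{A}\) is itself a 2-Duke. The whole argument then reduces to deciding, for \(a \in \mathcal{A}\), when \(a\) is \emph{forced} to be a 3-Duke, and to counting how many such 3-Dukes, together with \(d\), we can exhibit.

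The key structural device is the non-eclipsedness of \(d\), which I would use to classify the chickens of \(F_d \setminus \{d\}\). For such an \(x\), non-eclipsedness forbids \(x\) from pecking a strict superset of what \(d\) pecks, so one of two things holds: either \textbf{(A)} \(x\) fails to peck some chicken \(p\) that \(d\) pecks, whence \(p \rightarrow x\) and \(d \rightarrow p \rightarrow x\) is a peck chain of length \(2\) from \(d\) to \(x\); or \textbf{(B)} \(x\) has exactly the same out-neighbourhood as \(d\), in which case \(x\) is itself a 2-Duke. Two distinct type-(B) chickens, together with \(d\), already give three 2-Dukes and hence (iii), so I may assume there is at most one type-(B) chicken. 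The principal situation is when \(F_d \setminus \{d\}\) is entirely of type (A): then \(d\) reaches every chicken of its own flock by a chain of length \(2\), and consequently \emph{every} \(a \in \mathcal{A}\) is a 3-Duke, since prepending \(a \rightarrow d\) to \(d\)'s chains of length \(\leqslant 2\) produces chains of length \(\leqslant 3\) covering all of \(V \setminus F_a\).

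With this in hand the proof splits on \(|\mathcal{A}|\). If \(|\mathcal{A}| \geqslant 3\), then \(d\) together with three of its peckers are four 3-Dukes and (iv) holds. The delicate cases are \(|\mathcal{A}| \in \{1,2\}\), where I would push one level further, applying Theorem~\ref{thm:any-chicken-pecked} to a pecker of \(d\) and, in turn, to its peckers, so as to manufacture either an extra 3-Duke or a second and third 2-Duke---exactly in the spirit of the single-pecker analysis in Theorem~\ref{thm:bi-flock-1Duke-or43Dukes}. Here the assumption that no 2-Duke pecks \(d\) is what certifies that any 2-Duke obtained this way (necessarily satisfying \(d \rightarrow t\)) is genuinely new, and the non-eclipsed-Duke lemma is what lets me assume the auxiliary Dukes are non-eclipsed before recursing. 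The lone type-(B) case must be treated apart, since there a pecker of \(d\) need not reach the type-(B) chicken in three steps; in that branch I would instead carry the two 2-Dukes already present and again examine their peckers. I expect the main obstacle to be precisely this low-degree bookkeeping---when \(d\) has only one or two peckers, or has a type-(B) flockmate---where the four required 3-Dukes, or the missing third 2-Duke, must be assembled by hand from peckers-of-peckers while repeatedly invoking Theorem~\ref{thm:any-chicken-pecked} to guarantee that the freshly found chickens are distinct and of the claimed type. This is what forces the proof into its nested cases and sub-cases.
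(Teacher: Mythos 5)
Your skeleton is essentially the paper's: assume no 1-Duke and no 2-Duke pecking $d$, classify the flock-mates of $d$ by whether they peck everything $d$ pecks (your type (B), the paper's set $\mathcal{K}$) or fail to peck something $d$ pecks (your type (A), reachable from $d$ by a chain of length 2), dispose of the case of two or more type-(B) chickens with three 2-Dukes, observe that when every flock-mate is type (A) each pecker of $d$ is a 3-Duke, and conclude (iv) outright when $d$ has at least three peckers. All of this is correct and matches Case 1 of the paper's proof.

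The gap is that everything you label ``delicate''---the cases where $d$ has only one or two peckers, and the case of a lone type-(B) flock-mate---is where the entire content of the lemma lives, and you have not carried any of it out. The paper spends roughly a dozen sub-cases on exactly these situations, and the arguments there are not the generic ``apply Theorem~\ref{thm:any-chicken-pecked} to peckers of peckers'' that you describe. For instance, with two peckers $i \rightarrow j$ the paper must split on whether some flock-mate $e$ of $d$ pecks \emph{both} $i$ and $j$: if none does, $i$ itself is a 2-Duke pecking $d$ (realizing outcome (ii), i.e.\ contradicting your standing assumption), while if such an $e$ exists a bespoke argument using the non-eclipsedness of $d$ shows that $d$, $e$, $i$, $j$ are all 3-Dukes. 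With a single pecker $t$ and a single flock-mate $r$ of $d$ pecking $t$, the fourth 3-Duke has to be extracted either from a chicken $u \rightarrow r$ outside $t$'s flock or, failing that, from a chicken $v$ in $t$'s flock with $v \rightarrow r$, and in each branch one must verify by hand which length-3 chains exist. None of these constructions is delivered by Theorem~\ref{thm:any-chicken-pecked} alone, and your sketch gives no evidence that the counts actually close up---for example that the freshly found Dukes are distinct from $d$ and its peckers, or that a 2-Duke produced in $d$'s flock is a genuinely new third 2-Duke rather than a repeat. Until those sub-cases are written out, what you have is a correct plan for the proof rather than a proof.
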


\begin{proof}
Suppose there are no 1-Dukes, and consider this {non-eclipsed} 2-Duke, $d$.
Since $d$ is not a 1-Duke, there exist chickens which peck $d$.
Any chicken which pecks $d$ has a peck chain of length at most 3 to all of the multi-flock graph except the chickens in the $d$'s flock.

\underline{Case 1:} $d$ is pecked by at least 3 chickens.
Let $\mathcal{K}$ be the set of chickens which are both in $d$'s flock \emph{and} peck all of the chickens $d$ pecks.
The chickens in $\mathcal{K}$ must also be 2-Dukes, as they peck all of the chickens $d$ pecks, although none are guaranteed.
Every chicken which pecks $d$ must be a 3-Duke over all of the multi-flock graph except $d$'s flock.
However, by the definition of \(\mathcal{K}\), all of $d$'s flock but $\mathcal{K} \cup \{d\}$ is pecked by a chicken pecked by $d$.
Therefore, the chickens pecking \(d\) are only potentially not 3-Dukes over \(\mathcal{K}\).

If \(\mathcal{K}\) has size zero, then there are four 3-Dukes: $d$ and the three chickens which peck it.
If \(\mathcal{K}\) has size at least two, then there are (at least) three 2-Dukes: $d$ and the chickens in $\mathcal{K}$.
The remaining case in when \(\mathcal{K}\) contains exactly one chicken, \(k\).
In this case, no chicken in \(k\) and \(d\)'s flock---besides \(k\) and \(d\) themselves---can peck all the chickens that \(k\) pecks.
This implies \(k\) has a peck chain of length 2 to every chicken in its flock other than itself and \(d\).
Let $\mathcal{P}$ be the set of chickens which peck $k$,
and observe that the chickens in \(\mathcal{P}\) must also peck $d$.

In this case, all chickens in $\mathcal{P}$ are 3-Dukes.

\underline{Case 1(a):} $|\mathcal{P}| \geqslant 2$.
See Figure~\ref{fig:thm5-case1a} for a depiction.
\begin{figure}
    \centering
    \begin{tikzpicture}
    
        \fill[black!10] (0,0) ellipse (4 and 1) ;
        
        \draw[dashed] (2,0) ellipse (2 and 0.5) ; 
        \draw[dashed] (1.5,2) ellipse (2.75 and 0.5) ; 
            
        \node[draw, circle] (d) at (-1,0) {$d$} ;
        \node[draw, circle] (k) at (2,0) {$k$} ;
        
        \node[draw,circle,fill] (e) at (-2,-2) {} ;
        \node[draw,circle,fill] (f) at (-3,0) {} ;
        \node[draw,circle,fill] (d1) at (-1.5,2) {} ;
        \node[draw,circle,fill] (d2) at (-1.0,2) {} ;
        \node[draw,circle,fill] (d3) at (-0.5,2) {} ;
        
        \node (K) at (4.5,0) {$\mathcal{K}$} ;
        \node (P) at (4.5,2) {$\mathcal{P}$} ;
        
        \draw[-latex] (d1) to (d) ;
        \draw[-latex] (d2) to (d) ;
        \draw[-latex] (d3) to (d) ;
        \draw[-latex] (d) to (e) ;
        \draw[-latex] (e) to (f) ;
        \draw[-latex] (2,1.5) to (k) ;
        \draw[-latex] (1,1.5) to (d) ;
        \draw[-latex] (k) to (e) ;

    \end{tikzpicture}
    \caption{Case 1(a) of Lemma~\ref{thm:2Duke-pecked-or-else}.}
    \label{fig:thm5-case1a}
\end{figure}
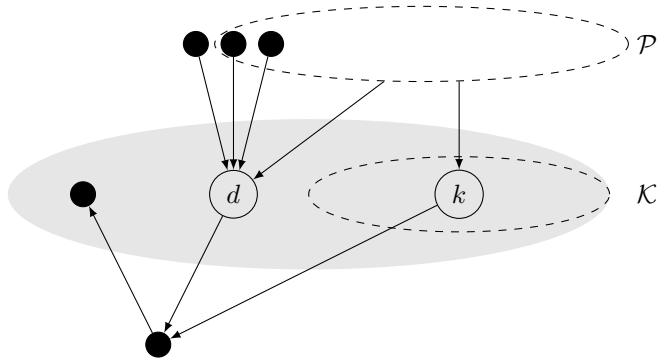
In this case, there are four 3-Dukes: $d$, $k$, and the chickens in $\mathcal{P}$.

\underline{Case 1(b):} $|\mathcal{P}| = 1$.
See Figure~\ref{fig:thm5-case1b} for a depiction.
\begin{figure}
    \centering
    \begin{tikzpicture}
    
        \fill[black!10] (0,0) ellipse (4 and 1) ;
        
        \draw[dashed] (2,0) ellipse (2 and 0.5) ; 
        \draw[dashed] (1.5,2) ellipse (2.25 and 0.5) ; 
            
        \node[draw, circle] (d) at (-1,0) {$d$} ;
        \node[draw, circle] (k) at (2,0) {$k$} ;
        
        \node[draw,circle,fill] (e) at (-2,-2) {} ;
        \node[draw,circle,fill] (f) at (-3,0) {} ;
        \node[draw,circle,fill] (d1) at (-1.5,2) {} ;
        \node[draw,circle,fill] (d2) at (-1.0,2) {} ;
        \node[draw,circle] (p) at (0,2) {$p$} ;
        
        \node (K) at (4.5,0) {$\mathcal{K}$} ;
        \node (P) at (4,2) {$\mathcal{P}$} ;
        
        \draw[-latex] (d1) to (d) ;
        \draw[-latex] (d2) to (d) ;
        \draw[-latex] (p) to (d) ;
        \draw[-latex] (d) to (e) ;
        \draw[-latex] (e) to (f) ;
        \draw[-latex] (p) to (k) ;
        \draw[-latex] (k) to (e) ;

    \end{tikzpicture}
    \caption{Case 1(b) of Lemma~\ref{thm:2Duke-pecked-or-else}.}
    \label{fig:thm5-case1b}
\end{figure}
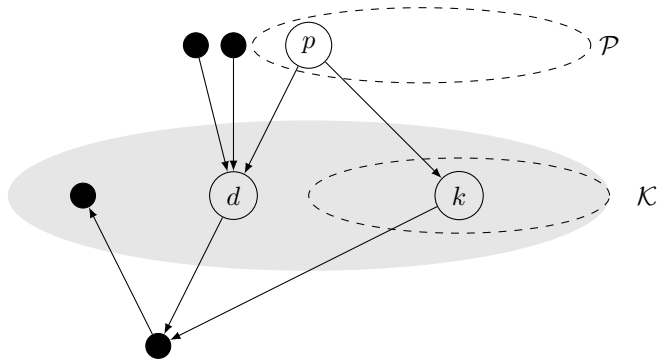
In this case, let $p$ be the sole member of $\mathcal{P}$, and note that \(k\) pecks every chicken not in its flock except for \(p\).
Then $p$ is a 2-Duke over all chickens not in $k$ and $d$'s flock.
Any chicken $c$ in $k$'s flock which pecked $p$ would be a 3-Duke.
We know that neither $k$ nor $d$ peck $p$, so if such a chicken $c$ exists, then there are four 3-Dukes: $k$, $d$, $p$, and $c$.
If $p$ pecks all chickens in $k$'s flock, then $p$ is a 2-Duke, and so there are three 2-Dukes: $k$, $d$, and $p$.


\underline{Case 2:} $d$ is pecked by exactly two chickens.
Let $i$ and $j$ be the two chickens which peck $d$.
In this case, either $i$ pecks $j$ without loss of generality, or they share a flock.

\underline{Case 2(a):} $i$ pecks $j$.
If $i$ pecks $j$, then $i$ is a 2-Duke over all chickens not in $d$'s flock, because $d$ pecks all chickens outside of its own flock other than $i$ and $j$.
Additionally, $j$ is a 3-Duke over the chickens not in $d$'s flock, because $d$ is a 2-Duke and $j \rightarrow d$.
We also know that any chicken in $d$'s flock which pecks $i$ is a 3-Duke.

\underline{Case 2(a)\textit{i}:} no chicken in $d$'s flock pecks both $i$ and $j$.
If no chicken in $d$'s flock pecks both $i$ and $j$, then $i$ is a 2-Duke which pecks $d$, and we're done.

\underline{Case 2(a)\textit{ii}:} some chicken, $e$, in $d$'s flock, pecks both $i$ and $j$.
See Figure~\ref{fig:thm5-case2aii} for a depiction.
\begin{figure}
    \centering
    \begin{tikzpicture}
        
        \fill[black!10] (-1,2) ellipse (0.8 and 0.6) ;
        \fill[black!10] (1,2) ellipse (0.8 and 0.6) ;
        \fill[black!10] (0,0) ellipse (3 and 1) ;
    
        \node[draw,circle] (d) at (0,0) {$d$} ;
        \node[draw,circle] (i) at (-1,2) {$i$} ;
        \node[draw,circle] (j) at (1,2) {$j$} ;
        \node[draw,circle] (e) at (-2,0) {$e$} ;
        
        \node[draw,circle,fill] (v1) at (-1,-2) {} ;
        \node[draw,circle,fill] (v2) at (1,-2) {} ;
        \node[draw,circle,fill] (v3) at (2,0) {} ;
        
        \draw[-latex] (i) to (d) ;
        \draw[-latex] (j) to (d) ;
        \draw[-latex] (i) to (j) ;
        \draw[-latex] (e) to (i) ;
        \draw[-latex] (e) to (j) ;
        \draw[-latex] (d) to (v1) ;
        \draw[-latex] (v1) to (e) ;
        \draw[-latex] (d) to (v2) ;
        \draw[-latex] (v2) to (v3) ;
    \end{tikzpicture}
    \caption{Case 2(a)\textit{ii} of Lemma~\ref{thm:2Duke-pecked-or-else}.}
    \label{fig:thm5-case2aii}
\end{figure}
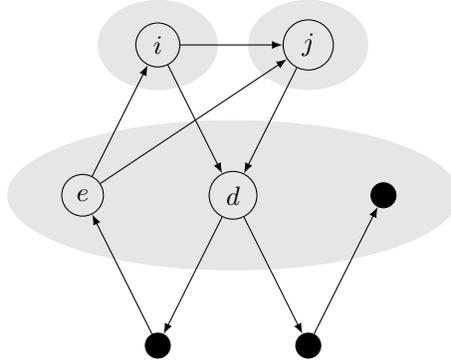
In this case, $e$ is a 3-Duke because $e \rightarrow i \rightarrow j$, $e \rightarrow i \rightarrow d$, and $d$ pecks every chicken not in its flock other than $i$ and $j$.
Furthermore, since $d$ is {non-eclipsed}, every other chicken in its flock is pecked by something $d$ pecks.
This implies $i$ and $j$ are 3-Dukes.
As $d$, $e$, $i$, and $j$ are 3-Dukes, we are done.


\underline{Case 2(b):} $i$ and $j$ share a flock.
If $i$ and $j$ share a flock, then they are both 2-Dukes over any chicken not in $d$'s flock, as $d$ pecks every chicken not in its flock other than $i$ and $j$.
Note that $i$ and $j$ must then also be 3-Dukes over any chicken in $d$'s flock which they either peck directly or which does not peck all chickens pecked by $d$.
See Figure~\ref{fig:thm5-case2b-i3d} for a depiction.
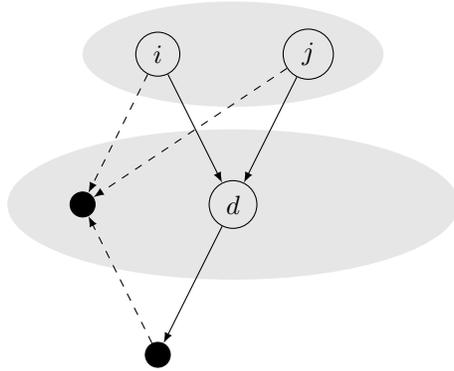
\begin{figure}
    \centering
    \begin{tikzpicture}
        
        \fill[black!10] (0,0) ellipse (3 and 1) ;
        \fill[black!10] (0,2) ellipse (2 and 0.7) ;
    
        \node[draw,circle] (d) at (0,0) {$d$} ;
        \node[draw,circle] (i) at (-1,2) {$i$} ;
        \node[draw,circle] (j) at (1,2) {$j$} ;
        
        \node[draw,circle,fill] (e) at (-1,-2) {};
        \node[draw,circle,fill] (f) at (-2,0) {};
        
        \draw[-latex] (i) to (d) ;
        \draw[-latex] (j) to (d) ;
        \draw[-latex] (d) to (e) ;
        \draw[-latex,dashed] (e) to (f) ;
        \draw[-latex,dashed] (i) to (f) ;
        \draw[-latex,dashed] (j) to (f) ;
    \end{tikzpicture}
    \caption{Possibility of $i$ or $j$ being a 3-Duke over a chicken in $d$'s flock from Case 2(b) of Lemma~\ref{thm:2Duke-pecked-or-else}.}
    \label{fig:thm5-case2b-i3d}
\end{figure}
Any chicken in $d$'s flock which pecks at least one of $i$ or $j$ and all of the chickens $d$ pecks must be a 2-Duke, and either at least one such chicken must exist, or $i$ and $j$ are both 3-Dukes.
See Figure~\ref{fig:thm5-case2b-2Duke} for a depiction.
\begin{figure}
    \centering
    \begin{tikzpicture}
        
        \fill[black!10] (0,0) ellipse (3 and 1) ;
        \fill[black!10] (0,2) ellipse (2 and 0.7) ;
    
        \node[draw,circle] (d) at (0,0) {$d$} ;
        \node[draw,circle] (i) at (-1,2) {$i$} ;
        \node[draw,circle] (j) at (1,2) {$j$} ;
        
        \node[draw,circle,fill] (e) at (-1,-2) {};
        \node[draw,circle,fill] (f) at (-2,0) {};
        
        \draw[-latex] (i) to (d) ;
        \draw[-latex] (j) to (d) ;
        \draw[-latex] (d) to (e) ;
        \draw[-latex] (f) to (e) ;
        \draw[-latex,dashed] (f) to (i) ;
        \draw[-latex,dashed] (f) to (j) ;
    \end{tikzpicture}
    \caption{Possibility of a second 2-Duke in $d$'s flock from Case 2(b) of Lemma~\ref{thm:2Duke-pecked-or-else}.}
    \label{fig:thm5-case2b-2Duke}
\end{figure}
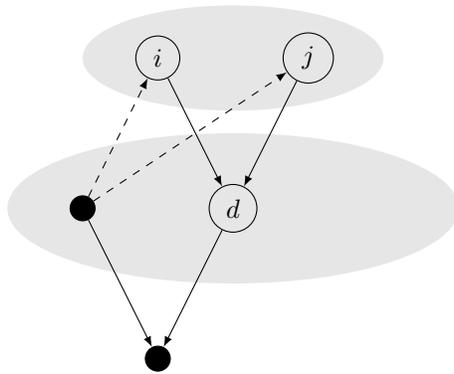

\underline{Case 2(b)\textit{i}:} $i$ and $j$ are both 3-Dukes.
In this case, by Theorem~\ref{thm:any-chicken-pecked}, for each of $i$ and $j$, it either shares its flock with a 2-Duke, or it is pecked by a 3-Duke.
If $i$ or $j$ is pecked by a 3-Duke, then we have four 3-Dukes: $d$, $i$, $j$, and the 3-Duke pecking $i$ or $j$.
If neither $i$ nor $j$ is pecked by a 3-Duke, then they both share their flock with a 2-Duke.
In the event that both $i$ and $j$ are 2-Dukes, then we have three 2-Dukes: $d$, $i$, and $j$.
Alternatively, if there is another 2-Duke in $i$ and $j$'s flock, then we have four 3-Dukes: $d$, $i$, $j$, and the other 2-Duke in $i$ and $j$'s flock.

\underline{Case 2(b)\textit{ii}:} There is some other 2-Duke $f$ in $d$'s flock which pecks either $i$ or $j$.
Note that $f$ cannot peck both $i$ and $j$, or else it would peck all chickens not in its flock, and be a 1-Duke.
Without loss of generality, suppose $i \rightarrow f \rightarrow j$, as in Figure~\ref{fig:thm5-case2bii-x}.
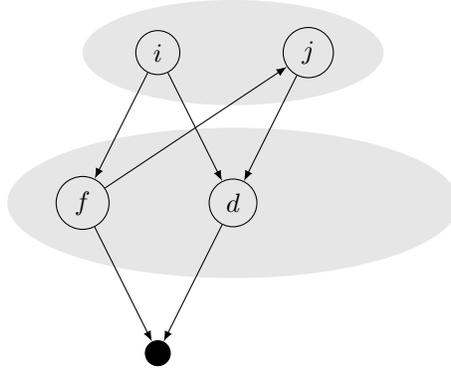
\begin{figure}
    \centering
    \begin{tikzpicture}
        
        \fill[black!10] (0,0) ellipse (3 and 1) ;
        \fill[black!10] (0,2) ellipse (2 and 0.7) ;
    
        \node[draw,circle] (d) at (0,0) {$d$} ;
        \node[draw,circle] (i) at (-1,2) {$i$} ;
        \node[draw,circle] (j) at (1,2) {$j$} ;
        
        \node[draw,circle,fill] (e) at (-1,-2) {};
        \node[draw,circle] (f) at (-2,0) {$f$};
        
        \draw[-latex] (i) to (d) ;
        \draw[-latex] (j) to (d) ;
        \draw[-latex] (d) to (e) ;
        \draw[-latex] (f) to (e) ;
        \draw[-latex] (i) to (f) ;
        \draw[-latex] (f) to (j) ;
    \end{tikzpicture}
    \caption{Case 2(b)\textit{ii} of Lemma~\ref{thm:2Duke-pecked-or-else}.}
    \label{fig:thm5-case2bii-x}
\end{figure}
If $f$ is the only other 2-Duke in $d$'s flock, then $i$ is a 3-Duke, which in turn by Theorem~\ref{thm:any-chicken-pecked} must be pecked by a 3-Duke or share a flock with a 2-Duke.
In the first case, there are four 3-Dukes: $d$, $f$, $i$, and the 3-Duke pecking $i$.
In the second case, there are three 2-Dukes: $d$, $f$, and the other 2-Duke in $i$'s flock.
However, if $f$ is not the only other 2-Duke in $d$'s flock, then we have three 2-Dukes: $d$, $f$, and yet another 2-Duke in their flock.

\underline{Case 3:} $d$ is pecked by exactly one chicken $t$.
In this case, $t$ is a 2-Duke over all chickens not in $d$'s flock
Furthermore, $t$ is a 3-Duke over all chickens because no chicken in $d$'s flock can peck all chickens $d$ pecks and $t$, or they would peck all chickens not in their flock, and be a 1-Duke.
Any chicken in $d$'s flock which pecks $t$ is a 3-Duke.

\underline{Case 3(a):} there are at least two chickens in $d$'s flock who peck $t$.
In this case, there are four 3-Dukes: $d$, $t$, and the multiple chickens in $d$'s flock who peck $t$.

\underline{Case 3(b):} there are no chickens in $d$'s flock who peck $t$.
In this case, $t$ is a 2-Duke pecking $d$, and we are done.

\underline{Case 3(c):} there is exactly one chicken, $r$, in $d$'s flock who pecks $t$.
In this case, we already have three 3-Dukes: $d$, $t$, and $r$.
Now we have all the tools we need to finish the proof.
To do this, we will consider any chicken $u$ which pecks $r$ and is not in $t$'s flock.

\underline{Case 3(c)\textit{i}:}
If such a chicken exists, either $t \rightarrow u$ or $u \rightarrow t$.
See Figure~\ref{fig:thm5-case3ci} for a depiction.
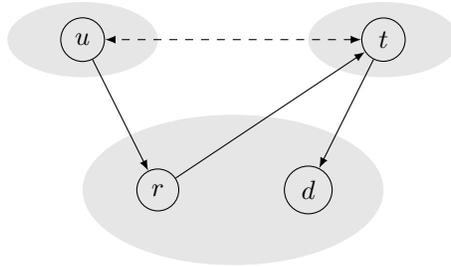
\begin{figure}
    \centering
    \begin{tikzpicture}
        \fill[black!10] (0,0) ellipse (2 and 1) ;
        \fill[black!10] (-2,2) ellipse (1 and 0.5) ;
        \fill[black!10] (2,2) ellipse (1 and 0.5) ;
        
        \node[draw,circle] (d) at (1,0) {$d$} ;
        \node[draw,circle] (t) at (2,2) {$t$} ;
        \node[draw,circle] (u) at (-2,2) {$u$} ;
        \node[draw,circle] (r) at (-1,0) {$r$} ;
        
        \draw[-latex] (t) to (d) ;
        \draw[-latex] (u) to (r) ;
        \draw[-latex] (r) to (t) ;
        
        \draw[latex-latex,dashed] (t) to (u) ;
    \end{tikzpicture}
    \caption{Case 3(c)\textit{i} of Lemma~\ref{thm:2Duke-pecked-or-else}.}
    \label{fig:thm5-case3ci}
\end{figure}
If $t \rightarrow u$, then $t$ is a 2-Duke, as $t \rightarrow d$, which pecks everything outside of its flock except $t$, and $t$ pecks every chicken in $d$'s flock except $r$, and $t \rightarrow u \rightarrow r$.
In this case, we are done, as $t$ is a 2-Duke which pecks $d$.
If, instead, $u \rightarrow t$, then $u$ is a 3-Duke, as
$u \rightarrow t \rightarrow d$ which pecks everything not in $d$'s flock,
$u \rightarrow t$ which pecks everything in $d$'s flock except $r$,
and
$u \rightarrow r$.
Furthermore, $r$ is a 3-Duke, as $r \rightarrow t \rightarrow d$ which pecks everything not in $d$ and $r$'s flock.
Then, we have four 3-Dukes: $t$, $d$, $u$, and $r$.

\underline{Case 3(c)\textit{ii}:}
The final case we must consider is if there is no chicken which both pecks $r$ and is not in $t$'s flock.
In other words, $r$ pecks every chicken not in $t$'s flock and not in $r$'s flock.
Since $r$ is not a 1-Duke, there is still some chicken $v$ in $t$'s flock which pecks $r$, as in Figure~\ref{fig:thm5-case3cii}.
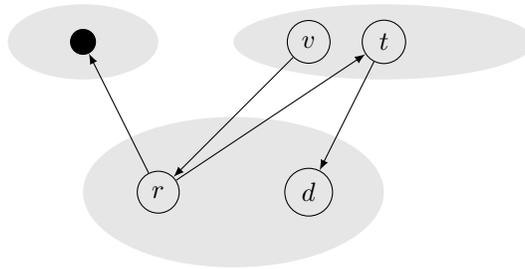
\begin{figure}
    \centering
    \begin{tikzpicture}
        \fill[black!10] (0,0) ellipse (2 and 1) ;
        \fill[black!10] (-2,2) ellipse (1 and 0.5) ;
        \fill[black!10] (2,2) ellipse (2 and 0.5) ;
        
        \node[draw,circle] (d) at (1,0) {$d$} ;
        \node[draw,circle] (t) at (2,2) {$t$} ;
        \node[draw,circle,fill] (u) at (-2,2) {} ;
        \node[draw,circle] (r) at (-1,0) {$r$} ;
        \node[draw,circle] (v) at (1,2) {$v$} ;
        
        \draw[-latex] (t) to (d) ;
        \draw[-latex] (r) to (u) ;
        \draw[-latex] (r) to (t) ;
        \draw[-latex] (v) to (r) ;
        
    \end{tikzpicture}
    \caption{Case 3(c)\textit{ii} of Lemma~\ref{thm:2Duke-pecked-or-else}.}
    \label{fig:thm5-case3cii}
\end{figure}
As in the previous case, $d$, $t$, and $r$ are still 3-Dukes.
To complete the proof, it remains to show that $v$ is a 3-Duke as well.
Observe that
$v \rightarrow r$ which pecks every chicken not in $v$'s flock or $r$'s flock,
and
$v \rightarrow r \rightarrow t$ which pecks every chicken in $r$'s flock except for $r$
Therefore, $v$ has a chain of length at most three to any chicken not in its flock and again we have four 3-Dukes: $d$, $t$, $r$, and $v$.
\end{proof}

\begin{theorem}
\label{thm:multiflock-finale}
In any multi-flock graph, there exists either a 1-Duke, three 2-Dukes, or four 3-Dukes
\end{theorem}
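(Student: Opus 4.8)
The plan is to combine Corollary~\ref{cor:either-2D-or-3Ds}, the non-eclipsed lemma immediately preceding Lemma~\ref{thm:2Duke-pecked-or-else}, and Lemma~\ref{thm:2Duke-pecked-or-else} itself into a short bootstrapping argument. First I would assume there is no 1-Duke, since otherwise the conclusion holds trivially. By Corollary~\ref{cor:either-2D-or-3Ds} the graph then contains either four 3-Dukes---in which case we are done---or at least one 2-Duke. Assuming the latter, the non-eclipsed lemma produces a \emph{non-eclipsed} 2-Duke $d$, which is exactly the hypothesis needed to invoke Lemma~\ref{thm:2Duke-pecked-or-else}. That lemma hands us one of four outcomes: a 1-Duke (excluded by our assumption), three 2-Dukes, four 3-Dukes, or the statement that $d$ is pecked by another 2-Duke. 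The first three outcomes each give one of the conclusions of the theorem outright, so the whole difficulty concentrates in the last one.

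The main obstacle, then, is outcome (ii): knowing only that $d$ is pecked by \emph{some} second 2-Duke, I still need to manufacture a third one (or four 3-Dukes). My plan is to iterate. Using the second part of the non-eclipsed lemma---if a chicken is pecked by an $m$-Duke it is pecked by a non-eclipsed one---I would upgrade the 2-Duke pecking $d$ to a \emph{non-eclipsed} 2-Duke $e$ with $e \rightarrow d$. Because $e$ is non-eclipsed, Lemma~\ref{thm:2Duke-pecked-or-else} applies to $e$ as well. Again the outcomes ``1-Duke,'' ``three 2-Dukes,'' and ``four 3-Dukes'' finish the proof, so the only surviving branch is that $e$ is pecked by a second 2-Duke, which the non-eclipsed lemma once more upgrades to a non-eclipsed 2-Duke $f$ with $f \rightarrow e$. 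The point I would want to check carefully is that each reapplication of Lemma~\ref{thm:2Duke-pecked-or-else} is legitimate, i.e.\ that we always feed it a genuinely non-eclipsed 2-Duke; this is precisely what the two parts of the non-eclipsed lemma are arranged to guarantee.

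Finally I would collect $d$, $e$, and $f$, all of them 2-Dukes, and argue they are pairwise distinct, which yields three 2-Dukes and closes the last case. Distinctness is where antisymmetry of pecking does the work: $e \neq d$ and $f \neq e$ because an arrow forbids a chicken from equaling its own target, and $f \neq d$ because $f \rightarrow e$ together with $e \rightarrow d$ would otherwise force the impossible $2$-cycle $d \rightarrow e \rightarrow d$ between two chickens in different flocks. Since the relations $e \rightarrow d$ and $f \rightarrow e$ already guarantee the relevant chickens lie in distinct flocks, this antisymmetry step is clean, and the three distinct 2-Dukes $d$, $e$, $f$ complete the proof.
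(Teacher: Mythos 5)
Your proposal is correct and follows essentially the same route as the paper: reduce via Corollary~\ref{cor:either-2D-or-3Ds} to the case of a 2-Duke, pass to a non-eclipsed one, apply Lemma~\ref{thm:2Duke-pecked-or-else}, and in the surviving ``pecked by a 2-Duke'' branch upgrade to a non-eclipsed 2-Duke and apply the lemma once more, so that a second survival of that branch yields three 2-Dukes. Your explicit distinctness check for $d$, $e$, $f$ (which the paper leaves implicit) is a welcome bit of extra care, and it is valid since $e \rightarrow d$ and $f \rightarrow e$ rule out coincidences by the one-way orientation of inter-flock edges.
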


\begin{proof}
By Corollary~\ref{cor:either-2D-or-3Ds}, there exists either a 2-Duke or four 3-Dukes.
If there are four 3-Dukes, we're done.
Otherwise, by Lemma~\ref{thm:2Duke-pecked-or-else} and Lemma 3, we can consider a non-eclipsed 2-Duke and conclude that either it is pecked by a 2-Duke, a 1-Duke exists, three 2-Dukes exist, or four 3-Dukes exist.
We have completed our proof unless our original 2-Duke is pecked by a 2-Duke.
Our final step is to consider a non-eclipsed such 2-Duke, as by Lemma 3 one must exist, and apply Lemma~\ref{thm:2Duke-pecked-or-else} again.
Either our second 2-Duke is also pecked by a 2-Duke, in which case there are three 2-Dukes, or a 1-Duke exists, or three 2-Dukes exist, or four 3-Dukes exist.
In any of these cases, our statement holds.
\end{proof}

Theorem~\ref{thm:multiflock-finale} is powerful in the sense that we cannot guarantee more than four 3-Dukes in the absence of other Dukes.
This is because we can show that sometimes only four 3-Dukes exist; figure~\ref{fig:5-flock-4-3-Dukes} depicts a graph in which we have only four 3-Dukes.
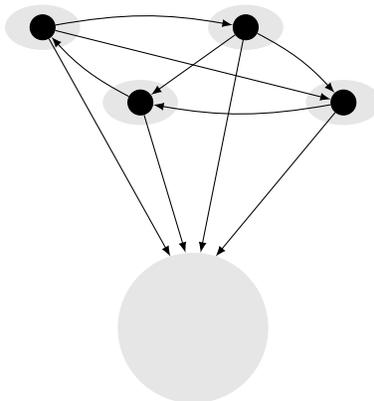
\begin{figure}[ht]
    \centering
    \begin{tikzpicture}
        \coordinate (fd) at (-0.7,1) ;
        \coordinate (fa) at (-2,2) ;
        \coordinate (fb) at (0.7,2) ;
        \coordinate (fc) at (2,1) ;
        \coordinate (e) at (0,-2) ;
    
        \fill[black!10] (fa) ellipse (0.5 and 0.3) ;
        \fill[black!10] (fb) ellipse (0.5 and 0.3) ;
        \fill[black!10] (fc) ellipse (0.5 and 0.3) ;
        \fill[black!10] (fd) ellipse (0.5 and 0.3) ;
        \fill[black!10] (e) circle (1) ;
        
        \node[draw,circle,fill] (a) at (fa) {} ;
        \node[draw,circle,fill] (b) at (fb) {} ;
        \node[draw,circle,fill] (c) at (fc) {} ;
        \node[draw,circle,fill] (d) at (fd) {} ;
        
        \draw[-latex] (a) to [bend left=10] (b) ;
        \draw[-latex] (a) to (c) ;
        \draw[-latex] (b) to [bend left=10] (c) ;
        \draw[-latex] (b) to (d) ;
        \draw[-latex] (c) to [bend left=10] (d) ;
        \draw[-latex] (d) to [bend left=10] (a) ;
        
        \draw[-latex] (a) to (-0.3,-1.05) ;
        \draw[-latex] (b) to (0.1,-1) ;
        \draw[-latex] (c) to (0.3,-1.05) ;
        \draw[-latex] (d) to (-0.1,-1) ;

    \end{tikzpicture}
    \caption{A 5-flock graph with exactly four 3-Dukes, each pecking every chicken in the bottom flock.}
    \label{fig:5-flock-4-3-Dukes}
\end{figure}

\section{Conclusion}

At this point, we may put down our pen.
Several basic existences of Dukes have been proved.
We have not examined 4-Dukes, or higher level Dukes, but in some sense there is not a need to.
Our goal was to examine the existence of a dominant chicken, and we have done so by proving the existence of 1-, 2-, and 3-Dukes
Thus, 4-Dukes, and any higher level Dukes, would not actually be dominant.

\section{Acknowledgements}

I'd like to thank Graham Gordon, my math mentor, who helped me turn my proofs into a paper.
I'd like to thank Mia Smith, my graph theory teacher, who set me on the path of exploration and proofread my first theorems.
Lastly, I'd like to thank Proof School, the institution which helped me connect with these people and encouraged me through all of it.

\bibliographystyle{alpha}
\bibliography{the}

\begin{thebibliography}{Mau80}

\bibitem[Gut86]{gutin:radii-of-n-partite-tournaments}
G.~Gutin.
\newblock The radii of $n$-partite tournaments.
\newblock {\em Mathematical notes of the Academy of Sciences of the USSR},
  40:743--744, 1986.

\bibitem[GY00]{gutin-yeo:Kings-in-semicomp-multip-digraphs}
Gregory Gutin and Anders Yeo.
\newblock Kings in semicomplete multipartite digraphs.
\newblock {\em Journal of Graph Theory}, 33(3):177--183, 2000.

\bibitem[KT95]{koh-tan:multipartite-Kings}
K.M. Koh and B.P. Tan.
\newblock Kings in multipartite tournaments.
\newblock {\em Discrete Mathematics}, 147(1):171--183, 1995.

\bibitem[KT96]{koh-tan:bipartite-4-Kings-no-3-Kings}
K.M. Koh and B.P. Tan.
\newblock Number of 4-kings in bipartite tournaments with no 3-kings.
\newblock {\em Discrete Mathematics}, 154(1):281--287, 1996.

\bibitem[KT97]{koh-tan:multipartite-number-of-Kings}
K.M. Koh and B.P. Tan.
\newblock The number of kings in a multipartite tournament.
\newblock {\em Discrete Mathematics}, 167-168:411--418, 1997.
\newblock Selected Papers 15th British Combinatorial Conference.

\bibitem[Mau80]{maurer:King-chickens}
Stephen~B. Maurer.
\newblock The king chicken theorems.
\newblock {\em Mathematics Magazine}, 53(2):67--80, 1980.

\bibitem[PT91]{petrovic-thomassen:k-partite-Kings}
Vojislav Petrovic and Carsten Thomassen.
\newblock Kings in k-partite tournaments.
\newblock {\em Discrete Mathematics}, 98(3):237--238, 1991.

\bibitem[Tan06]{tan:3-Kings-and-4-Kings}
B.P. Tan.
\newblock On the 3-kings and 4-kings in multipartite tournaments.
\newblock {\em Discrete Mathematics}, 306(21):2702--2710, 2006.

\end{thebibliography}

\end{document}